\theoremstyle{plain}
\newtheorem{thm}{Theorem}[section]
\newtheorem{prop}[thm]{Proposition}
\newtheorem{cor}[thm]{Corollary}
\newtheorem{lem}[thm]{Lemma}
\theoremstyle{definition}
\newtheorem{remark}[thm]{Remark}
\newcommand{\abs}[1]{\lvert#1\rvert}
\newcommand{\norm}[1]{\lVert#1\rVert}
\newcommand{\bigabs}[1]{\bigl\lvert#1\bigr\rvert}
\newcommand{\bignorm}[1]{\bigl\lVert#1\bigr\rVert}
\newcommand{\Bigabs}[1]{\Bigl\lvert#1\Bigr\rvert}
\newcommand{\la}{\langle}
\newcommand{\ra}{\rangle}
\newcommand{\supp}{\operatorname{supp}}
\newcommand{\N}{{\mathbb N}}
\newcommand{\E}{{\mathbb E}}
\newcommand{\R}{{\mathbb R}}
\newcommand{\X}{{\mathcal X}}
\newcommand{\C}{{\mathcal C}}
\newcommand{\lh}{{L^\Phi}}
\newcommand{\ls}{{L^\Psi}}
\newcommand{\hh}{{H^\Phi}}
\newcommand{\hs}{{H^\Psi}}
\newcommand{\om}{\omega}
\newcommand{\Om}{\Omega}
\newcommand{\Sig}{\Sigma}
\newcommand{\cQ}{{\mathcal Q}}
\newcommand{\cA}{{\mathcal A}}
\newcommand{\cT}{{\mathcal T}}
\newcommand{\cX}{{\mathcal X}}
\newcommand{\PP}{{\mathbb P}}
\newcommand{\one}{{\mathbbm{1}}}
\newcommand{\al}{\alpha}
\newcommand{\bs}{\setminus}
\newcommand{\ol}{\overline}
\def\one{\mathbb 1}
\author[N.~Gao]{Niushan Gao}
\address{Department of Mathematics and Computer Science, 
University of Lethbridge, Lethbridge, Canada T1K 3M4}
\email{gao.niushan@uleth.ca}
\author[D.~Leung]{Denny H.~Leung}
\address{Department of Mathematics, National University of Singapore, Singapore
117543}
\email{matlhh@nus.edu.sg}
\author[F.~Xanthos]{Foivos Xanthos}
\address{Department of Mathematics, 
Ryerson University, 350 Victoria St.,
Toronto, ON, M5B 2K3, Canada.}
\email{foivos@ryerson.ca}
\title[Dual Representation]{Closedness of convex sets in Orlicz spaces with
applications to dual representation of risk measures}
\keywords{Dual representation, risk measures, convex functionals, Fatou property, order closed sets, order closures, Orlicz
spaces, Orlicz hearts}
\subjclass[2010]{46E30, 46A20, 91B30}
\thanks{The first author is a PIMS Postdoctoral Fellow. Research of the second
author is partially supported by AcRF grant R-146-000-242-114.
Part of the research for
this paper
was carried out  while
the second author was on a visit to the Department of Mathematical and
Statistical Sciences at the University of Alberta. He is grateful to his host
Vladimir Troitsky and the Department for the warm hospitality and conducive
working environment provided. The third author
acknowledges the support of an NSERC grant.}
\date{\today}
\begin{document}

\begin{abstract}
Let $(\Phi,\Psi)$ be a conjugate pair of  Orlicz functions.  A set in the Orlicz
space $L^\Phi$ is said to be order closed if 
it is closed with respect to dominated convergence of sequences of functions.
 A well known  problem arising from the theory of risk measures in financial
mathematics asks whether
order closedness of a convex set in $L^\Phi$ characterizes closedness with
respect to the topology $\sigma(L^\Phi,L^\Psi)$.  (See \cite[p.\
3585]{Owari:14}.)
In this paper, we show that for a norm bounded convex set in $L^\Phi$, order
closedness and $\sigma(L^\Phi,L^\Psi)$-closedness are indeed equivalent.
In general, however, coincidence of order closedness and
$\sigma(L^\Phi,L^\Psi)$-closedness 
of convex sets in $L^\Phi$ is equivalent to the validity of the Krein-Smulian
Theorem for the topology $\sigma(L^\Phi,L^\Psi)$; that is, a convex set is
$\sigma(L^\Phi,L^\Psi)$-closed if and only if it is closed with respect to the
bounded-$\sigma(L^\Phi,L^\Psi)$ topology.
As a result, we show that order closedness and
$\sigma(L^\Phi,L^\Psi)$-closedness of convex sets in $L^\Phi$ are equivalent  if
and only if either $\Phi$ or $\Psi$ satisfies the $\Delta_2$-condition.
Using this, we prove the surprising result that:  \emph{If (and only if) $\Phi$ and $\Psi$ both fail the $\Delta_2$-condition, then there exists a  
coherent risk measure on $\lh$ that has the Fatou property but fails the
Fenchel-Moreau dual representation with respect to the dual pair $(L^\Phi, L^\Psi)$}.
A similar analysis is carried out for the dual pair of Orlicz hearts
$(H^\Phi,H^\Psi)$.
\end{abstract}

\maketitle

\section{Introduction}

In the seminal paper \cite{ADEH:99}, a theoretical foundation was laid for the
problem of quantifying the risk 
of a financial position in terms of coherent risk measures.  The theory of risk
measures has since been an active and fruitful area of research in Mathematical
Finance (cf.\
\cite{Arai:10,Arai:14,BF:10,Biagini:11,CL:09,Delbaen:09,FKM:13,Fri:02,
Jouini06,
Orihuela:12}).  It has also motivated new developments in Convex Analysis and
Functional Analysis (cf.\ \cite{DRAPE16,FILL09,GUO10,Owari:14}).

In the axiomatic treatment of risk measures, financial  positions are modeled by
a vector space $\cX$, which includes constants, of random variables on a
probability space $(\Om,\Sig,\PP)$.
A {\em coherent risk measure} on $\cX$ is a functional $\rho:\cX\to
(-\infty,\infty]$ that is proper (i.e., not identically $\infty$) and satisfies
the following properties:
\begin{enumerate}
\item (Subadditive) $\rho(X_1 + X_2) \leq  \rho(X_1) +\rho(X_2)$ for all
$X_1,X_2\in \cX$.
\item (Monotone) $\rho(X_1) \leq \rho(X_2)$ if $X_1,X_2\in \cX$ and $X_1 \geq
X_2$ a.s.
\item (Cash additive) $\rho(X+m\one) = \rho(X) - m$ for any $X\in \cX$ and any
$m\in \R$.
\item (Positively homogeneous) $\rho(\lambda X) = \lambda \rho(X)$ for any $X
\in
\cX$ and any $0 < \lambda \in \R$.
\end{enumerate}
Observe that a coherent risk measure is convex, i.e.,
\[ \rho(\lambda X_1 + (1-\lambda)X_2) \leq  \lambda\rho(X_1)
+(1-\lambda)\rho(X_2) \]
\text{ for all $X_1,X_2\in \cX$ and all $\lambda\in [0,1]$}.

An important topic in the theory of risk measures is to determine when a risk
measure on $\cX$ admits a  representation with respect to some duality involving
$\cX$.
The first major result in this direction was obtained by Delbaen [10], who used
as model space $\cX$ the space of all bounded random variables $L^\infty(\PP)$
and considered the duality $(L^\infty,L^1)$.
\begin{thm}[Delbaen]\label{del}
The following are equivalent for every coherent risk measure $\rho$ on
$L^\infty(\PP)$.
\begin{enumerate}
\item There is a set $\cQ$ of  nonnegative random variables with expectation $1$
such that
\[ \rho(X) = \sup_{Y\in \cQ}\E[-XY] \text{ for any $X\in L^\infty$},\]
\item $\rho$ satisfies the {\em ($L^\infty$-) Fatou property}:
\[ \rho(X) \leq \liminf_n \rho(X_n) \text{ whenever } (X_n) \text{ is {\em
bounded} in $L^\infty$ and $X_n \xrightarrow{a.s.} X$.}\]
\end{enumerate}
\end{thm}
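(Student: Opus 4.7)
For the easy direction $(1) \Rightarrow (2)$: if $\sup_n \norm{X_n}_\infty \leq M$ and $X_n \to X$ a.s., then for each $Y \in \cQ$ the bounded convergence theorem gives $\E[-X_n Y] \to \E[-XY]$ (the integrands are dominated by $M|Y|\in L^1$). Since $\rho(X_n) \geq \E[-X_n Y]$, taking $\liminf_n$ and then the supremum over $Y \in \cQ$ yields $\liminf_n \rho(X_n) \geq \rho(X)$.

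For $(2) \Rightarrow (1)$, my plan is to run the Fenchel--Moreau biconjugate theorem in the dual pair $(L^\infty, L^1)$. Define the convex conjugate
\[ \rho^*(Y) = \sup_{X \in L^\infty}\bigl(\E[-XY] - \rho(X)\bigr), \qquad Y \in L^1. \]
Assuming $\rho$ is $\sigma(L^\infty, L^1)$-lower semicontinuous (the hard step), Fenchel--Moreau gives $\rho(X) = \sup_{Y \in L^1}\bigl(\E[-XY] - \rho^*(Y)\bigr)$. Positive homogeneity of $\rho$ forces $\rho^*$ to take only the values $0$ and $+\infty$, so setting $\cQ = \{Y \in L^1 : \rho^*(Y) = 0\}$ reduces the representation to $\rho(X) = \sup_{Y \in \cQ}\E[-XY]$. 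To show that $\cQ$ consists of densities (nonnegative, expectation $1$), I would test against $X = \pm a\one$ (using cash additivity to force $\E Y = 1$ whenever $\rho^*(Y) < \infty$) and against $X = c\one_A$ for arbitrary measurable $A$ and large $c > 0$ (using monotonicity, namely $\rho(X) \leq 0$ for $X \geq 0$, to force $\E[Y \one_A] \geq 0$).

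The principal obstacle is the lower semicontinuity of $\rho$, since the sequential Fatou hypothesis is formally weaker than $\sigma(L^\infty, L^1)$-lower semicontinuity (the weak-$*$ topology on $L^\infty$ is not metrizable in general). I would attack each convex level set $C_c = \{X \in L^\infty : \rho(X) \leq c\}$ by the Krein--Smulian theorem, which is available because $L^\infty = (L^1)^*$: it suffices to show that $C_c \cap rB_{L^\infty}$ is $\sigma(L^\infty, L^1)$-closed for every $r > 0$. On the norm-bounded set $rB_{L^\infty}$, $\sigma(L^\infty, L^1)$-convergence of a sequence implies weak convergence in $L^2$ (since $L^2 \hookrightarrow L^1$ on a probability space), so Mazur's lemma supplies convex combinations $Z_k$ of the original sequence, lying in $C_c \cap rB_{L^\infty}$, converging in $L^2$-norm to the limit $X$; passing to a further subsequence gives $Z_k \to X$ almost surely, and the sequential Fatou property then yields $\rho(X) \leq \liminf_k \rho(Z_k) \leq c$, so that $X \in C_c$.
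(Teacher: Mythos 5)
The paper does not actually prove Theorem~\ref{del}; it is quoted from Delbaen as known background. Your proposal is, however, essentially the strategy that the paper itself develops in Section~\ref{sec2} for general Orlicz pairs and then specializes: Fatou property $\iff$ order closedness of the level sets (Proposition~\ref{prop3.1}), Fenchel--Moreau plus coherence to extract the set $\cQ$ of densities (Proposition~\ref{prop3.2}), closedness of the \emph{bounded} pieces of the level sets (Theorem~\ref{bdd-C}), and Krein--Smulian to remove the boundedness restriction (Corollary~\ref{orlicz-closed} and Theorem~\ref{orlicz-all}); the $L^\infty$ case is exactly the degenerate case in which $L^\infty=(L^1)^*$ so that Krein--Smulian is available unconditionally. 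Your direction $(1)\Rightarrow(2)$ and the identification of $\cQ$ (positive homogeneity forces $\rho^*\in\{0,\infty\}$, cash additivity forces $\E[Y]=1$, monotonicity forces $Y\ge 0$) are correct and match Proposition~\ref{prop3.2}.

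There is one step you should rearrange. As written, your argument for the weak$^*$ closedness of $C_c\cap rB_{L^\infty}$ starts from a $\sigma(L^\infty,L^1)$-convergent \emph{sequence} and applies Mazur to that sequence; this only proves weak$^*$ \emph{sequential} closedness of $C_c\cap rB_{L^\infty}$, which does not feed into Krein--Smulian unless $L^1(\PP)$ is separable (so that the weak$^*$ topology on bounded sets is metrizable) --- and the theorem is asserted for general probability spaces. The repair is to apply Mazur at the level of the set rather than of a sequence, which is exactly how Step~I of the proof of Theorem~\ref{bdd-C} proceeds: for the convex set $D=C_c\cap rB_{L^\infty}$ one has
\[
\overline{D}^{\sigma(L^\infty,L^1)}\subset \overline{D}^{\sigma(L^2,L^2)}=\overline{D}^{\norm{\cdot}_{2}},
\]
the inclusion because $\sigma(L^\infty,L^1)$ restricted to $L^\infty$ is finer than $\sigma(L^2,L^2)$, and the equality by Mazur's theorem for the convex set $D$. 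Hence any $X$ in the weak$^*$ closure of $D$ (a priori only a limit of a net) is automatically the $L^2$-norm limit of a genuine sequence from $D$; passing to an a.s.\ convergent subsequence, which is uniformly bounded by $r$, the Fatou property gives $X\in C_c$, and $X\in rB_{L^\infty}$ since the ball is weak$^*$ closed. With this reordering your proof is complete.
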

In Delbaen's theorem, the set $\cQ$ can be interpreted as a set of probability
distributions (scenarios) and the risk measure of $X$ is obtained as the worst
expected loss over the set of scenarios (stress tests).  In general, such dual
representations play an important role in optimization problems and portfolio
selection.  

The representation in (1) of Theorem \ref{del} is  connected
with $\sigma(L^\infty,L^1)$-lower semicontinuity of $\rho$ via the
Fenchel-Moreau Duality Theorem in convex analysis.  Here,
$\sigma(L^\infty,L^1)$-lower semicontinuity of $\rho$ refers to the property
that the sets 
\[\{\rho \leq \lambda\} = \{X\in L^\infty(\PP): \rho(X) \leq \lambda\}\] are
$\sigma(L^\infty,L^1)$-closed for any $\lambda \in \R$.
On the other hand, condition (2) in Theorem \ref{del} is equivalent to the fact
that the sets $\{\rho\leq \lambda\}$ are closed with respect to dominated
convergence of sequences.

In Theorem \ref{del}, as in other early framework for risk measures, the model
space consists of bounded financial positions.  The reader may refer to
F$\ddot{\text{o}}$llmer and Schied [17] for a comprehensive treatment of the
main results in this setting.
More realistic models of financial positions may involve unbounded random
variables; this motivates the study of  risk measures on model spaces beyond
$L^\infty$.  The Orlicz spaces $L^\Phi$ and Orlicz hearts $H^\Phi$, being
natural classes of Banach function spaces that generalize the $L^p$ spaces, have
emerged as important model spaces of (unbounded) financial positions.
Contributions to the study of risk measures on $L^\Phi$ and $H^\Phi$ may be
found in Cheridito and Li [8], Biagini and Fritelli [5] and more recently in
Kr$\ddot{\text{a}}$tschmer, Schied and Z$\ddot{\text{a}}$hle [24] and Gao and
Xanthos [19].  See also Gao et al [18] and the references therein.

Let $(\Phi,\Psi)$ be a conjugate pair of Orlicz functions.  (Refer to \S 2 for
definitions and basic facts concerning Orlicz spaces.)
In the paper [5], Biagini and Fritelli initiated the study of representation of
risk measures on $L^\Phi$ with respect to the duality  $(L^\Phi,L^\Psi)$.
Observing that dominated convergence of a sequence of random variables in
$L^\Phi$ implies $\sigma(L^\Phi,L^\Psi)$-convergence, they proposed the
following version of the Fatou property on $L^\Phi$:
A functional $\rho$ on $L^\Phi$ is said to satisfy the {\em Fatou property} if 
\[ \rho(X) \leq \liminf_n \rho(X_n) \text{ whenever } (X_n) \text{ is {\em order
 bounded} in $L^\Phi$ and $X_n \xrightarrow{a.s.} X$.}\]
They then claimed a result similar to Theorem \ref{del}  for every conjugate
pair $(L^\Phi,L^\Psi)$.
Specifically,
it was claimed that 
\begin{align*}
&\text{for every coherent risk measure } \rho:L^\Phi\to (-\infty,\infty],\\
&
\text{$\rho$ has the Fatou property if and only if} \\
(*) \quad \quad& \text{there is a set $\cQ$ of nonnegative random variables in
$L^\Psi$,}\\\
& \text{each having expectation $1$, such that}\\
&\rho(X) = \sup_{Y\in \cQ}\E[-XY] \text{ for any $X\in L^\Phi$}.\end{align*}
As in Theorem \ref{del}, validity of ($*$) is closely linked to the equivalence
of $\sigma(L^\Phi,L^\Psi)$-closedness and closedness under dominated convergence
for convex sets.
Their proof is based on an assertion that every Orlicz space enjoys a technical
property (with regard to convex sets) which they called the  $C$-property. 
Unfortunately, the validity of
the $C$-property has been disproved in [19] when $\Psi$ satisfies the
$\Delta_2$-condition.
Thus, the verity of the foregoing equivalence ($*$) for every conjugate pair
$(L^\Phi,L^\Psi)$ remained an important outstanding problem in the theory of
dual representation of risk measures.
See, e.g., \cite[p.~3585]{Owari:14}, where this problem is raised explicitly.

The main result of this paper is a comprehensive solution to this problem. It is
shown that the equivalence ($*$) holds for a conjugate pair $(L^\Phi,L^\Psi)$ if
and only if 
either $\Phi$ or $\Psi$ satisfies the $\Delta_2$-condition.
First we relate the problem to the equivalence of order closedness and
$\sigma(L^\Phi,L^\Psi)$-closedness for convex sets in $L^\Phi$.  It is shown
that for a {\em norm bounded} convex set in $L^\Phi$, order closedness and
$\sigma(L^\Phi,L^\Psi)$-closedness are indeed equivalent.  
As a result, the validity of ($*$) for a dual pair $(L^\Phi,L^\Psi)$ is
determined by whether the  Krein-Smulian Theorem for the topology
$\sigma(L^\Phi,L^\Psi)$ holds.  We complete the proof of the main result by
showing that the Krein-Smulian Theorem for the topology $\sigma(L^\Phi,L^\Psi)$
holds if and only if either $\Phi$ or $\Psi$ satisfies the $\Delta_2$-condition.

In the last section, we also investigate the dual representation problem on
$H^\Phi$ with respect to the dual pair $(H^\Phi,H^\Psi)$.  This complements the
results for the dual pair $(L^\Phi,L^\Psi)$  in Section~\ref{sec2}, for the dual
pair $(L^\Phi,H^\Psi)$  by Gao and Xanthos
[19] and  for the dual pair $(H^\Phi, L^\Psi)$ by Cheridito and Li [8].

\section{Orlicz spaces}

In this section, we collect the basic facts regarding Orlicz spaces and set the 
notation.
We adopt  \cite{AB:06} and  \cite{ES:02,RR:91} as standard references for
unexplained terminology
and facts
on Banach lattices and Orlicz spaces, respectively. 
Recall that a function $\Phi:[0,\infty) \rightarrow[0,\infty)$ is called an
\emph{Orlicz function} if it is convex, increasing, and $\Phi(0)=0$. 
Define the
\emph{conjugate function} of $\Phi$ by $$\Psi(s)=\sup\{ts-\Phi(t) : t \geq 0\}$$
for all $s\geq 0$. 
If $\lim_{t\to\infty}\frac{\Phi(t)}{t}=\infty$ (equivalently, if $\Psi$ is
finite-valued), then $\Psi$ is also an Orlicz function, and its conjugate is
$\Phi$.
{Throughout this paper, $(\Phi,\Psi)$ stands
for an
Orlicz pair such that $\lim_{t\to\infty}\frac{\Phi(t)}{t}=\infty$ and that
$\Phi(t)>0$ for any $t>0$}. The restrictions on
$\Phi$ are minor as they only eliminate the case where $L^\Phi$ coincides with
$L^1$ or $L^\infty$, in which cases our main results are either trivial or
known.

Throughout this paper, $(\Omega,\Sigma,\mathbb{P})$ stands for a nonatomic
probability space.
The \emph{Orlicz
space} $L^\Phi:=L^\Phi(\Omega,\Sigma,\mathbb{P})$ is the space of all
real-valued random variables
$X$ (modulo a.s.~equality) such that
$$\norm{X}_\Phi:=\inf\left\{\lambda>0:\E\left[\Phi\left(\frac{|X|}{\lambda}
\right)
\right]\leq 1\right\}<\infty.$$
The norm $||\cdot||_{\Phi}$
on $L^\Phi$ is called the \emph{Luxemburg norm}.  The subspace of
$L^\Phi$ consisting
of all $X\in L^\Phi $ such that
$$
\E\left[\Phi\left(\frac{|X|}{\lambda}\right)\right]<\infty\;\;\mbox{ for all }
\lambda>0$$
is conventionally called the \emph{Orlicz heart} of $L^\Phi$ and is denoted by
$H^\Phi$. It is well known that $L^\infty\subset H^\Phi\subset L^\Phi\subset
L^1$ and that $H^\Phi$ is a norm closed subspace of $L^\Phi$.
We always endow the conjugate Orlicz space $L^\Psi$ and
the conjugate Orlicz heart $H^\Psi$ with
the \emph{Orlicz norm}
\[
\norm{Y}_\Psi := \sup_{X\in L^\Phi, \,\norm{X}_\Phi\leq1}\bigabs{\mathbb{E}[XY]}
\]
for all $Y\in L^\Psi$, which is equivalent to the Luxemburg norm on $L^\Psi$.

An Orlicz function $\Phi$ satisfies the $\Delta_2$-{\em condition} if 
there exist $t_0\in(0,\infty)$ and $k\in\mathbb{R}$
such that
$\Phi(2t)<k\Phi(t)$ for all $t\geq t_0$.
It is well known that $\ls=(\hh)^*$ and that $\ls $,  being the order continuous dual of $\lh$, is a lattice  ideal in
$(\lh)^*$.
Moreover, the following conditions are equivalent.
\begin{enumerate}
\item $\ls=(\lh)^*$, 
\item $L^\Phi=H^\Phi$, 
\item The Orlicz function $\Phi$ satisfies the
$\Delta_2$-condition,
\item $\lh$ \emph{has order continuous norm},
i.e., $$X_\alpha\downarrow,\;\inf_\alpha X_\alpha=0\mbox{ in }\lh\quad
\Longrightarrow
\quad\inf_\alpha\norm{X_\alpha}_\Phi=0.$$
\end{enumerate}
A sequence $(X_n)$ in $L^\Phi$ is {\em order bounded} if there exists $X\in
L^\Phi$ such that $|X_n| \leq X$ a.s.~for all $n$.
A sequence $(X_n)$ in $L^\Phi$  \emph{order converges} to
$X\in L^\Phi$,
written as
$X_n\xrightarrow{o}X$ in $L^\Phi$, if $X_n\xrightarrow{a.s.}X$ and $(X_n)$ is
order bounded in $\lh$.
If $L^\Phi$ has order continuous norm, then 
$$X_n\xrightarrow{o}X\mbox{ in }L^\Phi\quad \Longrightarrow\quad
\norm{X_n-X}_\Phi\rightarrow0.$$
%Recall also that $L^\Phi$ is \emph{order complete} (i.e., Dedekind complete),
%i.e., every set in $L^\Phi$ that is bounded above by an element in $L^\Phi$ has
%a supremum in $L^\Phi$.
%We will use these facts without further reference.

\section{Dual representation with respect to the pair
$(L^\Phi,L^\Psi)$}\label{sec2}

This section is the main part of the paper, where we show that the equivalence
($*$) in the Introduction holds if and only if either $\Phi$ or $\Psi$ satisfies
the $\Delta_2$-condition.  

For a subset $\C$ of $\lh$, define its \emph{order
closure} in $\lh$ to be the set 
$$\overline{\C}^{o}:=\big\{X\in \lh:X_n\xrightarrow{o}X \mbox{ for some sequence
}(X_n)\mbox{ in }\C \big\}.\footnote{In the definition of order closures, we can equivalently use nets instead of sequences. Indeed, since $L^\Phi $ has the countable sup property, if $X_\alpha\xrightarrow{o}X$ in $L^\Phi$, then there exists countably many $(\alpha_n) $ such that $X_{\alpha_n}\xrightarrow{o}X$ in $L^\Phi$.}$$
We say that $\C$ is \emph{order closed} in $\lh$ if
$\C=\overline{\C}^{o}$. 
In spite of the terminology, 
\emph{$\overline{\C}^{o}$ is not necessarily order closed.}
By Dominated Convergence Theorem, 
$$X_n\xrightarrow{o}X\mbox{ in }\lh\quad\Longrightarrow\quad
\E[X_nY]\rightarrow\E[XY],\; \mbox{ for any }Y\in
\ls.$$
Thus $$\overline{\C}^{o}\subset \ol{\C}^{\sigma_s(\lh,\ls)} \subset
\overline{\C}^{\sigma(\lh,\ls)}$$ for any set $\C\subset L^\Phi$, where $
\ol{\C}^{\sigma_s(\lh,\ls)}$ denotes the $\sigma(\lh,\ls)$-{\em sequential}
closure of $\C$.
In particular,
every $\sigma(\lh,\ls)$-closed set is order closed.

We begin by examining the connection between  ($*$) and the
equivalence of order closedness and $\sigma(L^\Phi,L^\Psi)$-closedness of convex
sets in $L^\Phi$.
The next two propositions are essentially known.  We include the proofs for the
sake of completeness.

\begin{prop}\label{prop3.1}
Let  $\rho:L^\Phi \to (-\infty,\infty]$ be a proper convex functional. Then
$\rho$ has the Fatou property if and only if the set $\C_\lambda = \{X\in
L^\Phi: \rho(X) \leq \lambda\}$ is order closed for any $\lambda \in \R$.
\end{prop}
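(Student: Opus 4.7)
The plan is to prove both directions directly; convexity plays no role here, only the definitions.

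For the forward direction, assume $\rho$ has the Fatou property and fix $\lambda \in \mathbb R$. I would pick an arbitrary sequence $(X_n) \subset \mathcal C_\lambda$ with $X_n \xrightarrow{o} X$ in $L^\Phi$; by definition this means $X_n \to X$ a.s.\ and $(X_n)$ is order bounded. The Fatou property then yields $\rho(X) \leq \liminf_n \rho(X_n) \leq \lambda$, so $X \in \mathcal C_\lambda$. Hence $\mathcal C_\lambda$ is order closed.

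For the reverse direction, assume every $\mathcal C_\lambda$ is order closed and let $(X_n)$ be order bounded in $L^\Phi$ with $X_n \xrightarrow{a.s.} X$. Set $\mu := \liminf_n \rho(X_n)$. If $\mu = +\infty$, the Fatou inequality is trivial. Otherwise, for each $\varepsilon > 0$, I would extract a subsequence $(X_{n_k})$ with $\rho(X_{n_k}) \leq \mu + \varepsilon$, so that $(X_{n_k}) \subset \mathcal C_{\mu+\varepsilon}$. Since a subsequence of an order bounded sequence is order bounded and still converges a.s.\ to $X$, we have $X_{n_k} \xrightarrow{o} X$. Order closedness of $\mathcal C_{\mu+\varepsilon}$ forces $X \in \mathcal C_{\mu+\varepsilon}$, i.e., $\rho(X) \leq \mu + \varepsilon$; letting $\varepsilon \downarrow 0$ gives $\rho(X) \leq \mu$. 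The remaining case $\mu = -\infty$ cannot actually arise, because applying the same subsequence argument for every $\lambda \in \mathbb R$ would give $\rho(X) \leq \lambda$ for all $\lambda$, contradicting $\rho(X) > -\infty$.

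There is no real obstacle here; the argument is essentially a textbook unwinding of the definitions of the Fatou property and order convergence, together with the standard subsequence trick to exploit the $\liminf$. The only subtlety worth flagging explicitly is that the Fatou property is phrased for order bounded sequences, which matches the definition of order convergence used in the excerpt, so the two notions line up cleanly.
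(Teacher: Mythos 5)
Your proof is correct and follows essentially the same route as the paper: the forward direction is immediate from the definitions, and the reverse direction uses the same subsequence trick of passing to a subsequence lying in $\C_{\lambda}$ for each $\lambda$ strictly above the $\liminf$. Your explicit handling of the case $\liminf_n\rho(X_n)=-\infty$ (ruled out because $\rho$ takes values in $(-\infty,\infty]$) is a minor point the paper leaves implicit, but the argument is the same.
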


\begin{proof}
The fact that $\C_\lambda$ is order closed if $\rho$ satisfies the Fatou
property is immediate from the definitions.
Conversely, suppose that $\C_\lambda$ is order closed for any $\lambda\in \R$.
Let $(X_n)$  be a sequence in $L^\Phi$ that order converges  to $X\in L^\Phi$.
If $\liminf_n\rho(X_n) =\infty$, then $\rho(X) \leq \liminf_n\rho(X_n)$
trivially.
Otherwise, let $\lambda \in \R$ be such that $\lambda > \liminf_n\rho(X_n)$.
Choose a subsequence $(X_{n_k})$ so that $\rho(X_{n_k}) < \lambda$ for all $k$.
Then $X_{n_k} \in \C_\lambda$ for all $k$ and $X_{n_k}\stackrel{o}{\to}X$.  Thus
$X\in \C_\lambda$ and $\rho(X)\leq \lambda$.
As this applies to any $\lambda> \liminf_n\rho(X_n)$, $\rho(X) \leq
\liminf_n\rho(X_n)$.
Therefore, $\rho$ has the Fatou property.
\end{proof}

A functional $\rho:L^\Phi\to (-\infty,\infty]$ is said to be
$\sigma(\lh,\ls)$-{\em lower semicontinuous} if the set $\{X\in L^\Phi: \rho(X)
\leq \lambda\}$ is $\sigma(\lh,\ls)$-closed for all $\lambda\in \R$.

\begin{prop}\label{prop3.2}
The following are equivalent for a proper convex functional $\rho:L^\Phi \to
(-\infty,\infty]$ .
\begin{enumerate}
\item $\rho$ is lower $\sigma(\lh,\ls)$-semicontinuous.
\item Define $\rho^*(Y)=\sup_{X\in L^\Phi}(\E[XY] -\rho(X))$ for any $Y\in
L^\Psi$.  Then $\rho(X)=\sup_{Y\in L^\Psi}(\E[XY]-\rho^*(Y))$ for any $X\in
L^\Phi$.
\end{enumerate}
If $\rho$ is a coherent risk measure, then the conditions above are also
equivalent to 
\begin{enumerate}
\item[(3)] There is a set $\cQ$ of nonnegative random variables in $L^\Psi$,
{each having expectation $1$, such that}
$\rho(X) = \sup_{Y\in \cQ}\E[-XY]$ \text{ for any $X\in L^\Phi$}.
\end{enumerate}
\end{prop}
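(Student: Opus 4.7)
The plan is to establish $(1) \Leftrightarrow (2)$ via the Fenchel-Moreau theorem in the dual pair $(L^\Phi, L^\Psi)$, then show $(1) \Leftrightarrow (3)$ for coherent $\rho$ by computing the structure of $\rho^*$ from positive homogeneity, monotonicity, and cash additivity. The only genuine work is the computation in $(2) \Rightarrow (3)$; the rest is classical convex duality.

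For $(2) \Rightarrow (1)$, note that for each fixed $Y \in L^\Psi$ the map $X \mapsto \E[XY] - \rho^*(Y)$ is $\sigma(L^\Phi, L^\Psi)$-continuous (possibly $-\infty$), so $\rho$ is a pointwise supremum of $\sigma(L^\Phi, L^\Psi)$-continuous affine functionals, hence $\sigma(L^\Phi, L^\Psi)$-lower semicontinuous. For $(1) \Rightarrow (2)$, I would appeal to the Fenchel-Moreau biconjugation theorem applied in the locally convex space $(L^\Phi, \sigma(L^\Phi, L^\Psi))$, whose continuous dual is exactly $L^\Psi$. The standard argument: given $X_0 \in L^\Phi$ and $c < \rho(X_0)$, the epigraph of $\rho$ is $\sigma(L^\Phi, L^\Psi)$-closed and convex by (1), so Hahn-Banach separates $(X_0, c)$ from it by a $\sigma(L^\Phi, L^\Psi) \times \R$-continuous affine functional, yielding $Y \in L^\Psi$ such that $\E[X_0 Y] - \rho^*(Y) \geq c$; this gives $\rho^{**}(X_0) \geq c$, and the reverse inequality $\rho^{**} \leq \rho$ is immediate from the definition of $\rho^*$.

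The implication $(3) \Rightarrow (1)$ is again immediate, as each $X \mapsto \E[-XY]$ with $Y \in \cQ \subset L^\Psi$ is $\sigma(L^\Phi, L^\Psi)$-continuous. For $(2) \Rightarrow (3)$ assuming $\rho$ is a coherent risk measure, I would extract $\cQ$ from the effective domain of $\rho^*$. First, positive homogeneity gives $\rho^*(\lambda Y) = \lambda \rho^*(Y)$ for $\lambda > 0$, and since $\rho$ is proper and positively homogeneous (so $\rho(0) \leq 0$), $\rho^*$ takes values in $\{0, \infty\}$; set $\mathcal{D} = \{Y \in L^\Psi : \rho^*(Y) = 0\}$, so that $\rho(X) = \sup_{Y \in \mathcal{D}} \E[XY]$. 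Next, monotonicity yields $Y \leq 0$ a.s.\ for every $Y \in \mathcal{D}$: for $H \in L^\Phi_+$ we have $\rho(X + H) \leq \rho(X)$, hence $\E[(X+H)Y] - \rho(X+H) \geq \E[XY] - \rho(X) + \E[HY]$, and taking suprema over $X$ forces $\E[HY] \leq 0$ for all $H \geq 0$. Finally, cash additivity $\rho(X + m\one) = \rho(X) - m$ applied similarly forces $m(\E[Y] + 1) \leq 0$ for every $m \in \R$, so $\E[Y] = -1$. Setting $\cQ = \{-Y : Y \in \mathcal{D}\}$ gives a set of nonnegative random variables in $L^\Psi$ with $\E[Z] = 1$ for $Z \in \cQ$, and $\rho(X) = \sup_{Z \in \cQ} \E[-XZ]$ as required.

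No step presents a serious obstacle: the separation theorem and Fenchel-Moreau are off-the-shelf once one observes that the dual of $(L^\Phi, \sigma(L^\Phi, L^\Psi))$ is $L^\Psi$, and the extraction of $\cQ$ from $\rho^*$ is a routine three-line computation per coherence axiom. The only mild care needed is that $\rho$ is proper so that $\rho^*$ is not identically $-\infty$ and $\mathcal{D}$ is nonempty; this is assumed in the statement.
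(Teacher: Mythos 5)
Your proposal is correct and follows essentially the same route as the paper: $(1)\Leftrightarrow(2)$ by Fenchel--Moreau in the dual pair $(L^\Phi,L^\Psi)$, $(3)\Rightarrow(1)$ trivially, and $(2)\Rightarrow(3)$ by showing $\rho^*\in\{0,\infty\}$ from positive homogeneity and then using monotonicity and cash additivity to pin down the sign and expectation of elements of $\{\rho^*=0\}$. The only differences are cosmetic (you work with $\mathcal{D}=\{\rho^*(Y)=0\}$ and set $\cQ=-\mathcal{D}$, and you take suprema over all $X$ where the paper tests against a fixed $Z$ with $\rho(Z)<\infty$ and indicator functions), so there is nothing substantive to flag.
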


\begin{proof}

\noindent(1) $\iff$ (2) is a consequence of the  Fenchel-Moreau Duality Theorem
(\cite[Theorem~1.11]{B:11}).

For the rest of the proof, assume that $\rho$ is a coherent risk measure on
$L^\Phi$.
The implication (3) $\implies$ (1) is trivial.

\noindent(2) $\implies$(3).
Assume that (2) holds.  Choose $Z\in L^\Phi$ so that $\rho(Z) < \infty$. From
positive homogeneity of $\rho$, one deduces easily that $\rho^*(\lambda W) =
\lambda\rho^*(W)$ for any $W\in L^\Psi$ and any $0 < \lambda \in \R$.
Hence $\rho^*(W) = 0$ or $\infty$ for any $W \in L^\Psi$.
Suppose that $W \in L^\Psi$ and $\rho^*(W) =0$.
Let $A  = \{\om: W(\om) >0\}$ and set $X = \one_A$.  Then $0 \leq X\in L^\Phi$
and $\E[XW] \geq 0$. 
For any
 $\lambda>0$, by monotonicity of $\rho$,
\[ \lambda
\E[XW] +\E[ZW]=\E[ (\lambda
X+Z)W]\leq  \rho(\lambda X+Z)+\rho^*(W)\leq
\rho(Z)+\rho^*(W)<\infty.\]
Since $\lambda$ is arbitrary, $\E[W\one_{\{W > 0\}}] =\E[XW] = 0$ and hence $W
\leq 0$ a.s.
Define 
\[\mathcal{Q}:=\{Y\in L^\Psi:\rho^*(-Y)=0\}.\]
 By the preceding argument, $Y \geq 0$ a.s.\ for any $Y \in \cQ$.
 Also, by (3) and the fact that $\rho^*(-W) = \infty$ for all $W\in L^\Psi \bs
\cQ$, we see that $\rho(X) = \sup_{Y\in\mathcal{Q}}\E[-XY]$ for any $X\in
L^\Phi$.
Finally, for any $m\in\R$, 
\[\rho(Z)-m=\rho(Z+m\one)=\sup_{Y\in\mathcal{Q}}(-\E[
ZY]-m\E[Y]).\] 
Whence, for any $Y \in \cQ$ and any $m\in \R$,
\[\rho(Z)\geq 
-\E[ZY]+m(1-\E[Y]).\]
Therefore,  $\E[Y] = 1$ for any $Y\in \cQ$.
This completes the proof of (2) $\implies$ (3).  
\end{proof}

Let $\rho$ be a coherent risk measure on $L^\Phi$.  
The equivalence ($*$) asserts that any coherent risk measure on $L^\Phi$ has the
Fatou
property if and only if it satisfies condition (3) of Proposition \ref{prop3.2}.
Since the set $\C = \{X\in
L^\Phi: \rho(X) \leq 0\}$ is convex, by the preceding propositions, the
equivalence 
($*$) holds for $L^\Phi$ if for any
convex set in $L^\Phi$, order closedness and  $\sigma(L^\Phi,L^\Psi)$-closedness
are equivalent.
With this in mind, the following property was introduced in \cite{BF:10}.
The topology $\sigma(L^\Phi,L^\Psi)$ is said to have the {\em $C$-property} if 
for given any net $(X_\al)$ in $L^\Phi$ that $\sigma(L^\Phi,L^\Psi)$-converges
to $X\in L^\Phi$, there is a sequence $(Z_n)$ of convex combinations of
$(X_\al)$ so that $Z_n \stackrel{o}{\to} X$.
Evidently, if $\sigma(L^\Phi,L^\Psi)$ has the $C$-property, then $
\overline{\C}^{\sigma(\lh,\ls)}\subset \overline{\C}^{o}$
for any convex set $\C$ in $L^\Phi$.  Since the reverse inclusion always holds,
it follows that order closedness and $\sigma(\lh,\ls)$-closedness would coincide
for
any convex set in $\lh$.
Consequently, the equivalence ($*$) would hold for $L^\Phi$.
Unfortunately, the next proposition shows that the $C$-property occurs rather
sparsely.

\begin{prop}\label{cpp}
$\overline{\C}^{o}=\overline{\C}^{\sigma(\lh,\ls)}$  for
every convex
set
$\C$ in $\lh$ if and only if $\Phi$ satisfies the $\Delta_2$-condition.
\end{prop}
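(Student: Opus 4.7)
The plan is to treat the two directions separately, with necessity being the more subtle half.

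For sufficiency, assume $\Phi$ satisfies $\Delta_2$, so that $\lh=\hh$ has order continuous norm and $(\lh)^*=\ls$. Only the inclusion $\overline{\C}^{\sigma(\lh,\ls)}\subseteq\overline{\C}^o$ needs proof, the reverse being recorded in the excerpt. For a convex $\C$, Hahn--Banach identifies $\overline{\C}^{\sigma(\lh,\ls)}$ with the Luxemburg-norm closure of $\C$. Given $X$ in this norm closure, extract $X_n\in\C$ with $\sum_n\|X_n-X\|_\Phi<\infty$. Norm completeness and order continuity of $\lh$ ensure that $\sum_n|X_n-X|$ converges in norm to some $W\in\lh$ with $W\geq 0$; hence $|X_n-X|\leq W$ for every $n$ and $X_n\to X$ almost surely, so $X_n\xrightarrow{o}X$ and $X\in\overline{\C}^o$.

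For necessity, assume $\Phi$ fails $\Delta_2$. I first observe that the equality $\overline{\C}^o=\overline{\C}^{\sigma(\lh,\ls)}$ for every convex $\C$ is equivalent to the \emph{$C$-property} for $\sigma(\lh,\ls)$ introduced earlier: applying the equality to $\C=\operatorname{conv}\{X_\alpha\}$ for a $\sigma$-convergent net $(X_\alpha)$ produces a sequence of convex combinations in $\C$ order-converging to the $\sigma$-limit, and conversely the $C$-property combined with convexity of $\C$ forces $\overline{\C}^{\sigma(\lh,\ls)}\subseteq\overline{\C}^o$. It therefore suffices to exhibit a $\sigma(\lh,\ls)$-convergent net in $\lh$ whose convex combinations admit no order-convergent subsequence. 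Since $\Phi\notin\Delta_2$, $\hh\subsetneq\lh$, which yields both a positive singular functional $\varphi\in(\lh)^*\setminus\ls$ (with $\varphi\geq 0$ and $\varphi|_{\hh}=0$) and elements of $\lh$ that are positive but not in $\hh$. From these ingredients one designs a convex set $\C$ together with a distinguished target $X^{\ast}\in\overline{\C}^{\sigma(\lh,\ls)}$ for which any prospective order-convergent sequence of convex combinations from $\C$ would have to lie inside some order interval $[-W,W]$; the modular estimate inherited from the failure of $\Delta_2$ then shows that the $\varphi$-value of such a $W$ is too small to accommodate the $\varphi$-value of $X^{\ast}$, giving $X^{\ast}\notin\overline{\C}^o$.

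The principal obstacle is the explicit choice of this convex set. Most first-guess candidates ($\hh$ itself, the Luxemburg ball inside $\hh$, or half-spaces $\{\varphi\leq c\}$) turn out to satisfy $\overline{\C}^o=\overline{\C}^{\sigma(\lh,\ls)}$, because truncations at height $n$ produce sequences in $L^\infty\subseteq\hh$ within the candidate that order-converge to any desired element of the $\sigma$-closure. The working $\C$ must simultaneously be $\sigma$-rich enough to hold a designated singular element in its $\sigma$-closure and order-poor enough to block that element from $\overline{\C}^o$; achieving this uses the modular growth rate that witnesses the failure of $\Delta_2$ to bound the $\varphi$-size of any possible order majorant. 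This calibration is the technical heart of the necessity direction and dovetails with the Krein-Smulian considerations developed later in the paper.
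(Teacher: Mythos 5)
Your sufficiency direction is correct and is essentially the paper's argument: under $\Delta_2$ the topology $\sigma(\lh,\ls)$ is the weak topology, Mazur reduces the $\sigma$-closure to the norm closure, and a norm-convergent sequence with summable errors is order convergent (the paper cites this as a lemma; you prove it inline). The reduction of the necessity direction to disproving the $C$-property is also valid, though it merely restates the problem.

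The necessity direction, however, has a genuine gap: the counterexample is never constructed. You list the properties the convex set $\C$ must have and then explicitly defer its construction as ``the technical heart,'' so the proof is incomplete precisely where the content lies. Moreover, the mechanism you sketch is doubtful. If $X^{\ast}\in\overline{\C}^{o}$, the order bound $W$ of a witnessing sequence is an arbitrary element of $(\lh)_+$ depending on that sequence; nothing bounds $\varphi(W)$ from above for a singular $\varphi$, so the claim that ``the $\varphi$-value of such a $W$ is too small'' has no a priori basis --- arranging $\C$ so that \emph{every} admissible $W$ has controlled $\varphi$-value is exactly the construction you have not supplied. The paper's route is different and avoids singular functionals and modular estimates altogether: since $\Phi$ fails $\Delta_2$, Rao--Ren provides a disjoint sequence $(X_n)$ spanning a lattice copy of $\ell^\infty$ via $\cT\bigl((a_n)\bigr)=\sum_n a_nX_n$; each $Y\in\ls$ then acts on this copy through an $\ell^1$-vector $(\E[X_nY])$, so $\sigma(\lh,\ls)$-convergence on $\cT(\ell^\infty)$ is governed by $\sigma(\ell^\infty,\ell^1)$, while order convergence of $(\cT w_k)$ forces $(w_k)$ to be norm bounded and coordinatewise convergent, i.e.\ $\sigma(\ell^\infty,\ell^1)$-convergent. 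Ostrovskii's Theorem supplies a subspace $\mathcal{W}\subset\ell^\infty$ and a point $w$ in its $\sigma(\ell^\infty,\ell^1)$-closure that is not the $\sigma(\ell^\infty,\ell^1)$-limit of any sequence from $\mathcal{W}$; taking $\C=\cT(\mathcal{W})$ gives $\cT w\in\overline{\C}^{\sigma(\lh,\ls)}\setminus\overline{\C}^{o}$. Note that this uses only the failure of $\Delta_2$ for $\Phi$, with no hypothesis on $\Psi$.
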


\begin{proof}
If $\Phi$ satisfies the
$\Delta_2$-condition, then $\lh = H^\Phi$ and thus
$\sigma(\lh,\ls)$ is the weak topology on $L^\Phi$.  By Mazur's Theorem,
$\overline{\C}^{\sigma(\lh,\ls)}=\overline{\C}^{\norm{\cdot}}$.
Since every norm
convergent sequence admits a subsequence that order converges to the same
limit (see, e.g., \cite[Lemma~3.11]{GX:14}),
$\overline{\C}^{\sigma(\lh,\ls)}\subset \overline{\C}^{o}$.
Therefore, $\overline{\C}^{o}=\overline{\C}^{\sigma(\lh,\ls)}$.

Conversely, suppose that $\Phi$ fails the $\Delta_2$-condition. 
By
\cite[pp.~139, Theorem~5]{RR:91} (or \cite[Theorem~4.51]{AB:06}), there exist a
sequence $(X_n)$ of pairwise disjoint random variables in $\lh$ and a closed
sublattice $\X$ of $\lh$
such that the map $\mathcal{T}:\ell^\infty\rightarrow \mathcal{X}$ defined by
$\cT\big((a_n)_n\big)=\sum_na_nX_n$ (pointwise sum) is a Banach lattice
isomorphism.
Denote by $e$ the identically $1$ sequence in $\ell^\infty$. If  $Y\in L^\Psi$,
then 
$$\sum_n\bigabs{\E[X_nY]}\leq
\sum_n\E[X_n\abs{Y}]=\E\big[\cT e \abs{Y}\big]<\infty.$$
Hence $(\E[X_nY]) \in \ell^1$.
Thus, if $w = (a_n) \in \ell^\infty$, then
\begin{equation}\label{continuity} \E[(\cT w)Y] = \E\Big[\big(\sum
a_nX_n\big)Y\Big] = \sum a_n
\E[X_nY] = \big\la (\E[X_nY]), w\big\ra.
\end{equation}

By Ostrovskii's Theorem (cf.~\cite[Theorem~2.34]{HZ:07}), there exist a subspace
$\mathcal{W}$ of $\ell^\infty$ and
$w\in \overline{\mathcal{W}}^{\sigma(\ell^\infty,\ell^1)}$ such that $w$ is not
the
$\sigma(\ell^\infty,\ell^1)$-limit of any sequence in $\mathcal{W}$. 
Let $\C=\mathcal{T}(\mathcal{W})$.  Obviously, $\C$ is a convex set in $L^\Phi$.
Take a net  $(w_\al)\subset \mathcal{W}$ that
$\sigma(\ell^\infty,\ell^1)$-converges
to $w\in \ell^\infty$.
Let $Y\in L^\Psi$.  By \eqref{continuity},
\[ \E[(\cT w_\al)Y] = \la (\E[X_nY]), w_\al \ra \to \la (\E[X_nY]), w\ra
=\E[(\cT w)Y].\] 
Thus $(\cT w_\al)$ $\sigma(\lh,\ls)$-converges to $\cT w$ and $\cT
w\in\overline{\C}^{\sigma(\lh,\ls)}$.
Suppose, if possible, that $\mathcal{T}w\in \overline{\C}^{o}$.
Take a sequence $(w_k)$ in $\mathcal{W}$ such that
$\mathcal{T}w_k\xrightarrow{o}\mathcal{T}w$ in $\lh$.
Clearly, $(\mathcal{T}w_k)$, being order bounded, is norm bounded in $\lh$, so
that $(w_k)$ is norm bounded in $\ell^\infty$.
Write $w_k=(a_n^k)_n$ and $w=(a_n)_n$. Since the $X_n$'s are disjoint and 
$\mathcal{T}w_k=\sum_na_n^kX_n\xrightarrow{a.s.}Tw=\sum_na_nX_n$, 
$\lim_k a_n^k=a_n$ for each $n$. 
It follows that $(w_k)$ ${\sigma(\ell^\infty,\ell^1)}$-converges to $w$, 
contrary to the choice of $w$.
\end{proof}

However, the equality $\overline{\C}^o=\overline{\C}^{\sigma(L^\Phi,L^\Psi)}$
does hold in general for {\em norm bounded} convex sets  $\C \subset L^\Phi$.

\begin{thm}\label{bdd-C}
Let $\C$ be any norm bounded convex set in $L^\Phi$. Then
$\overline{\C}^o=\overline{\C}^{\sigma(L^\Phi,L^\Psi)}$. In particular,
$\overline{\C}^o$ is order closed, and $\C$ is
order closed if and only if it is $\sigma(L^\Phi,L^\Psi)$-closed. \end{thm}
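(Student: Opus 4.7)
The non-trivial inclusion to establish is $\overline{\C}^{\sigma(\lh,\ls)} \subseteq \overline{\C}^{o}$; the reverse has already been recorded above via dominated convergence, and the two auxiliary conclusions (order-closedness of $\overline{\C}^o$ and the equivalence of order-closedness and $\sigma(\lh,\ls)$-closedness for $\C$) are immediate formal consequences of this equality. Fix $X \in \overline{\C}^{\sigma(\lh,\ls)}$; after translating by $-X$, WLOG $X = 0$, and let $M = \sup_{Z \in \C}\|Z\|_\Phi$.

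\emph{Step 1 (Descent to $L^1$).} Because $\lim_{t\to\infty}\Phi(t)/t=\infty$, the conjugate $\Psi$ is finite-valued, so $L^\infty \subseteq \ls$; combined with the continuous inclusion $\lh \hookrightarrow L^1$, this gives $\sigma(\lh,L^\infty) \subseteq \sigma(\lh,\ls)$, and $\sigma(\lh,L^\infty)$ coincides with the restriction of $\sigma(L^1,L^\infty)$. Hence $0 \in \overline{\C}^{\sigma(L^1,L^\infty)}$ inside $L^1$, and Mazur's theorem in $L^1$ yields $0 \in \overline{\C}^{\|\cdot\|_1}$. Pick $(V_n) \subseteq \C$ with $\|V_n\|_1 \to 0$ and, after a subsequence, $V_n \to 0$ almost surely; the bound $\|V_n\|_\Phi \le M$ is preserved.

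\emph{Step 2 (Order-bounded convex combinations).} The objective is to produce $(W_n) \subseteq \C$ with $W_n \in \operatorname{conv}(V_k : k \ge n)$ and $\|W_n\|_\Phi \le 2^{-n}$. Granted this, $\sum_n \|W_n\|_\Phi \le 1$, so by the Fatou property of the Luxemburg norm (monotone convergence inside the defining integral) $Z := \sum_n |W_n|$ lies in $\lh$ with $\|Z\|_\Phi \le 1$, dominating every $W_n$. Each $W_n$ is a finite convex combination of $V_k$'s with $k \ge n$ and these tend to $0$ a.s., so $W_n \to 0$ a.s. Hence $W_n \xrightarrow{o} 0$ in $\lh$ along $(W_n) \subseteq \C$, placing $0 \in \overline{\C}^o$ as required.

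\emph{Main obstacle.} The crux is constructing convex combinations with summable Luxemburg norms. Norm-boundedness of $\C$ enters decisively through the truncation estimate, valid for any $Y \in \hs$,
\[
|\E[V_n Y]| \;\le\; k\,\|V_n\|_1 + M\bignorm{Y\one_{\{|Y|>k\}}}_{\Psi},
\]
in which the second term vanishes as $k\to\infty$ by order-continuity of the Luxemburg norm on $\hs$ and the first vanishes as $n\to\infty$ for each fixed $k$; the factor $M$ is inherited from the bound on $\C$. Thus $V_n \to 0$ in $\sigma(\lh,\hs)$. A direct appeal to Mazur in $\lh$ to obtain $\|W_n\|_\Phi \to 0$ would require weak convergence of $(V_n)$ in the full Banach sense (against $(\lh)^*$), which strictly exceeds $\ls$ when $\Phi$ fails $\Delta_2$. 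The workaround exploits the identification $\lh = (\hs)^*$ (so that by Banach--Alaoglu the norm-bounded convex set $\operatorname{conv}(V_k : k \ge n)$ is $\sigma(\lh,\hs)$-relatively compact) together with a diagonal Mazur argument, testing against a sequentially-chosen countable family of elements of $\hs$, to extract $(W_n)$ with the required norm decay. The norm-boundedness of $\C$ cannot be relaxed here, consistent with the failure displayed for unbounded convex sets by Proposition~\ref{cpp}.
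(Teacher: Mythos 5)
There is a genuine gap, and it sits exactly where you place the entire weight of the argument: the ``Main obstacle'' paragraph. Your plan requires producing $W_n\in\operatorname{conv}(V_k:k\ge n)\subseteq\C$ with $\norm{W_n}_\Phi\le 2^{-n}$, i.e.\ convex combinations of the tails that tend to $0$ in the \emph{Luxemburg norm}; in particular this would force $0\in\overline{\C}^{\norm{\cdot}_\Phi}$. That is false in general when $\Phi$ fails the $\Delta_2$-condition, and the proposed workaround --- Banach--Alaoglu compactness in $\sigma(\lh,\hs)$ plus a ``diagonal Mazur argument'' --- cannot repair it, because Mazur's theorem is a statement about the weak topology $\sigma(\X,\X^*)$ of the full dual pairing and has no analogue for a weak$^*$ topology: weak$^*$ convergence of a sequence does not yield norm-convergent convex combinations (in $\ell^\infty=(\ell^1)^*$ the tail indicators $(0,\dots,0,1,1,\dots)$ converge weak$^*$ to $0$ while every convex combination of them has norm $1$). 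Testing against countably many elements of $\hs$ only reproduces membership of $0$ in a weak$^*$ closure; it never produces norm smallness.

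The obstruction is realized inside $\lh$ itself. If $\Phi$ fails $\Delta_2$, take (as in Proposition~\ref{cpp}) disjoint normalized positive $X_k$ with $\widetilde X=\sum_kX_k\in\lh$ and $\cT:\ell^\infty\to\lh$, $\cT((a_k))=\sum_ka_kX_k$, an isomorphism onto its range. Put $V_n=\sum_{k>n}X_k$ and $\C=\operatorname{conv}\{V_n\}$: this is a norm bounded convex set, $0\le V_n\le\widetilde X$ and $V_n\to0$ a.s., so $0\in\overline{\C}^{o}\subseteq\overline{\C}^{\sigma(\lh,\ls)}$ and $\norm{V_n}_1\to0$; yet every element of $\operatorname{conv}(V_k:k\ge n)$ is $\cT(v)$ with $\norm{v}_\infty=1$, hence has $\norm{\cdot}_\Phi$ bounded below by $\norm{\cT^{-1}}^{-1}$. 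So the object your Step~2 asks for does not exist, even though the theorem's conclusion holds here for trivial reasons. The paper's proof circumvents this by never asking the elements of $\C$ to become norm-small: each relevant $X\in\C$ is split as $X=Z+W$ with $W=X\one_{\{|X|\le k\}}\in\hh$ and $Z$ the unbounded tail controlled only in \emph{modular}, $\E[\Phi(|Z|)]\le2^{-n}$. Mazur is then applied legitimately in $\hh$, where $\sigma(\hh,\ls)$ \emph{is} the weak topology, to make convex combinations of the $W$-parts norm-summable; the $Z$-parts, which need not be small in norm, are tamed by $\E[\Phi(\sup_n|Z_n|)]\le\sum_n\E[\Phi(|Z_n|)]\le1$, giving order boundedness, and by Markov's inequality, giving convergence in probability. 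That modular bookkeeping is the missing idea; your Step~1 reduction to $L^1$ and the final assembly in Step~2 are fine, but they cannot be fed without it.
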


\begin{proof}
As has been observed, the inclusion $\overline{\C}^{o}\subset
\overline{\C}^{\sigma(\lh,\ls)}$ always holds.  To prove the reverse inclusion,
it suffices to show that if $\C$ is a convex subset of the unit ball of $L^\Phi$
such that  $0 \in \ol{\C}^{\sigma(\lh,\ls)}$, then $0 \in \ol{\C}^o$.
We divide the proof into three steps.

\noindent \underline{Step I}: For each $n\geq 1$ and each $0\leq Y\in L^\Psi$, 
there exists a pair of disjoint random variables $Z_{Y,n}$ and $W_{Y,n}$
satisfying the following:
\begin{enumerate}
\item $Z_{Y,n}\in L^\Phi$ and $W_{Y,n}\in H^\Phi$,
\item $Z_{Y,n} + W_{Y,n} \in \C$,
\item $\E[\Phi(\abs{Z_{Y,n}})] \leq \frac{1}{2^n}$,
\item $\E[\abs{W_{Y,n}}Y] \leq 1$.
\end{enumerate}

Since $L^\Psi$ is a lattice ideal in $(L^\Phi)^*$,  by
\cite[Theorem~3.50]{AB:06}, the topological dual of $L^\Phi$ under
$\abs{\sigma}(L^\Phi,L^\Psi)$ is precisely $L^\Psi$. Thus, by Mazur's Theorem,
$$0\in
\overline{\C}^{\sigma(L^\Phi,L^\Psi)}=\overline{\C}^{\abs{\sigma}(L^\Phi,L^\Psi)
}
,$$so that there exists $X\in\C$ such that $\E[\abs{X}Y]\leq 1$.
Since $\C$ is in the unit ball, we have that $\E[\Phi(\abs{X})] \leq 1$.
Now take $k\geq 1$ such that $$\E\big[\one_{\{\abs{X} >
k\}}\Phi(\abs{X})\big]\leq \frac{1}{2^n}.$$
Set $$Z_{Y,n} = X\one_{\{\abs{X} > k\}}\;\;\mbox{ and }\;\;W_{Y,n} =
X\one_{\{\abs{X}
\leq 
k\}}.$$
Clearly, $Z_{Y,n}$ and $W_{Y,n}$ are disjoint.
Conditions (1)-(3) are easily verified.
Condition (4) holds because $$\E[\abs{W_{Y,n}}Y]\leq \E[\abs{X}Y]\leq 1.$$

\noindent \underline{Step II}: There exist sequences $(Z_n)$ and $(W_n)$ in
$L^\Phi$ such
that for each $n\geq 1$,
\begin{enumerate}
\item $X_n := Z_n + W_n\in \C$,
\item $\E[\Phi(|Z_n|)]\leq \frac{1}{2^n}$,
\item $\norm{W_n}_\Phi \leq \frac{1}{2^n}$.
\end{enumerate}

Keep the notation of Step I.  For any $n \geq 1$, define
$\mathcal{A}_{n}=\{W_{Y,n}: 0\leq Y\in L^\Psi\}\subset H^\Phi$.
Let $Y_1,\dots, Y_k\in L^\Psi$ and let $\varepsilon >0$ be given.
Set $Y = \frac{1}{\varepsilon}\sum^k_{i=1}\abs{Y_i} \in (L^\Psi)_+$. By Step I,
$$\bigabs{\E[ W_{Y,n}Y_i ]} \leq \E\big[\abs{W_{Y,n}Y_i}\big]  \leq \varepsilon
\E\big[\abs{W_{Y,n}}Y\big]\leq
\varepsilon.$$
This shows that $0$ lies in the $\sigma(H^\Phi,\ls)$-closed convex hull of
$\cA_n$.
Since $\sigma(H^\Phi,\ls)$ is the weak topology on $H^\Phi$, $0$ lies in the
norm closed convex hull of $\cA_n$.

Now take $W_{Y_i,n}\in \cA_n$, $1\leq i\leq k$, and a
convex combination
$W_n = \sum^k_{i=1}c_iW_{Y_i,n}$ such that $\norm{W_n}_\Phi\leq
\frac{1}{2^n}.$
Put $Z_n = \sum^k_{i=1}c_iZ_{Y_i,n}$.
Then 
\[ X_n:=Z_n + W_n = \sum^k_{i=1}c_i(Z_{Y_i,n}+W_{Y_i,n}) \in \C\] by
convexity of
$\C$.
Moreover, since $\Phi$ is a convex function, 
$$ \E[ \Phi(\abs{Z_n})] \leq \sum^k_{i=1}c_i\E\big[\Phi(\abs{Z_{Y_i,n}})\big] 
\leq
\frac{1}{2^n}.$$

\noindent\underline{Step III}. 
In the notation of Step II, a subsequence of $(X_n)$ order converges to $0$. 
Thus $0 \in \ol{\C}^o$.

From Step II ,we know that $\norm{W_n}_\Phi \leq \frac{1}{2^n}$ for all $n$ and
hence $\sum^\infty_1 |W_n| \in L^\Phi$.
Also, since $\Phi$ is continuous and increasing,
\[ \E\Big[\Phi(\sup_n \abs{Z_n}) \Big]= \E\Big[\sup_n\Phi(\abs{Z_n})\Big]\leq
\sum_1^\infty
\E[\Phi(\abs{Z_n})]\leq \sum
\frac{1}{2^n} = 1,\]
from which it follows that $\sup_n \abs{Z_n}\in L^\Phi$.
Therefore,
 $$\widetilde{X}: = \sup_n \abs{Z_n}+
\sum_1^\infty \abs{W_n} \in L^\Phi.$$ 
Obviously, $\abs{X_n}\leq \widetilde{X}$ for all $n\geq 1$.
Thus $(X_n)$ is an order bounded sequence in $L^\Phi$.
By Markov's Inequality, 
$$\Phi(\varepsilon)\mathbb{P}\{\abs{Z_n}>\varepsilon\} \leq
\E[\Phi(\abs{Z_n})]\leq \frac{1}{2^n}$$ for any $\varepsilon >0$.
It follows that $(Z_n)$ converges to $0$ in probability.
%Since $\norm{W_n}_\Phi \to 0$, it follows from \cite[Corollary~2.1.10]{ES:02}
%that $(W_n)$ converges to $0$ in probability.
Since $\sum_1^\infty
\abs{W_n} \in L^\Phi$, $(W_n)$ converges to $0$ a.s.
Therefore, 
%$(X_n)$ converges to $0$ in probability. A 
a subsequence of $(X_n)$
converges to $0$ a.s., and thus in order, since the whole sequence $(X_n)$ is
order bounded.
\end{proof}

Theorem~\ref{bdd-C} allows us to characterize  general order
closed convex sets in $L^\Phi$ in terms of the topology $\sigma(L^\Phi,L^\Psi)$.

\begin{cor}\label{orlicz-closed}
Denote by $\mathcal{B}$ the closed unit ball in $L^\Phi$. For a convex set $\C$
in $L^\Phi$, the following statements are equivalent:
\begin{enumerate}
 \item\label{orlicz-closed1} $\C$ is order closed,
\item\label{orlicz-closed2} $\C$ is $\sigma(L^\Phi,L^\Psi)$-sequentially closed.
\item\label{orlicz-closed3} $\C\cap k\mathcal{B}$ is
$\sigma(L^\Phi,L^\Psi)$-closed for all
$k\geq 1$. 
\end{enumerate}
\end{cor}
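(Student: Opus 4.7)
The plan is to prove $(2) \Rightarrow (1) \Rightarrow (3) \Rightarrow (2)$, with Theorem~\ref{bdd-C} doing the essential work. The implication $(2) \Rightarrow (1)$ is immediate from the observation made at the start of the section: order convergence in $L^\Phi$ implies $\sigma(L^\Phi, L^\Psi)$-convergence (Dominated Convergence Theorem), so any $\sigma(L^\Phi, L^\Psi)$-sequentially closed set is automatically order closed.

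For $(1) \Rightarrow (3)$, fix $k \geq 1$ and examine $\C \cap k\mathcal{B}$. This set is convex and norm bounded. As the intersection of two order closed sets, it is order closed provided $k\mathcal{B}$ itself is order closed, which follows from a short Fatou argument: if $X_n \xrightarrow{o} X$ with $\|X_n\|_\Phi \leq k$, then for each $\lambda > k$ we have $\E[\Phi(|X_n|/\lambda)] \leq 1$, and Fatou's lemma applied to the a.s.~convergence $\Phi(|X_n|/\lambda) \to \Phi(|X|/\lambda)$ yields $\E[\Phi(|X|/\lambda)] \leq 1$; letting $\lambda \searrow k$ gives $\|X\|_\Phi \leq k$. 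Theorem~\ref{bdd-C} then delivers $\sigma(L^\Phi, L^\Psi)$-closedness of $\C \cap k\mathcal{B}$.

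The remaining implication $(3) \Rightarrow (2)$ rests on the fact that every $\sigma(L^\Phi, L^\Psi)$-convergent sequence is norm bounded in $L^\Phi$. Given $(X_n) \subset \C$ with $X_n \to X$ in $\sigma(L^\Phi, L^\Psi)$, the functionals $Y \mapsto \E[X_n Y]$ on the Banach space $(L^\Psi, \|\cdot\|_\Psi)$ are pointwise bounded, so the Uniform Boundedness Principle yields some $M$ with $\sup_n\sup_{\|Y\|_\Psi \leq 1} |\E[X_n Y]| \leq M$; passage to the limit gives the same bound at $X$. Since $\sup_{\|Y\|_\Psi \leq 1}|\E[\cdot\, Y]|$ defines a norm on $L^\Phi$ equivalent to the Luxemburg norm (via the standard Orlicz/Luxemburg duality, ratio at most $2$), there exists $k$ with $\|X_n\|_\Phi, \|X\|_\Phi \leq k$. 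Thus $(X_n) \cup \{X\} \subset \C \cap k\mathcal{B}$, and (3) forces $X \in \C$. The only point needing care is confining both the sequence and its limit to a common ball, handled by the uniform boundedness argument and the norm equivalence above; given Theorem~\ref{bdd-C}, no real obstacle remains.
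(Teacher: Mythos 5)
Your proof is correct and follows essentially the same route as the paper's: $(2)\Rightarrow(1)$ from the dominated-convergence observation, $(1)\Rightarrow(3)$ via Theorem~\ref{bdd-C}, and $(3)\Rightarrow(2)$ from norm boundedness of $\sigma(L^\Phi,L^\Psi)$-convergent sequences. You merely make explicit two details the paper leaves implicit --- the Fatou's-lemma argument that $k\mathcal{B}$ is order closed (needed before invoking Theorem~\ref{bdd-C}) and the uniform-boundedness argument behind the norm boundedness of weakly convergent sequences --- both of which are accurate.
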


\begin{proof}
The implication
\eqref{orlicz-closed2}$\implies$\eqref{orlicz-closed1} follows from the
observation at the beginning of this section. 
Theorem~\ref{bdd-C} gives
\eqref{orlicz-closed1}$\implies$\eqref{orlicz-closed3}.
 The implication (3) $\implies$ (2) follows from the fact that every
$\sigma(\lh,\ls)$-convergent sequence is norm bounded.
\end{proof}

Let $\X$ be a Banach space with closed unit ball $\mathcal{B}$ and let $\tau$ be
a locally convex topology on $\X$. Say that $\tau$ has the
\emph{Krein-Smulian property} if a  convex set $\C$ in $\X$ is
$\tau$-closed precisely when $\C\cap k\mathcal{B}$ is
$\tau$-closed for all $k\geq 1$.
The well known Krein-Smulian Theorem says that for any Banach space $\X$, the
weak$^*$ topology $\sigma(\X^*,\X)$ on $\X^*$ has the Krein-Smulian property.
Corollary~\ref{orlicz-closed} leads to the natural question of characterizing
the pairs $(\Phi,\Psi)$ 
so that $\sigma(\lh,\ls)$ has the Krein-Smulian property.
The next lemma is the key construction to solving this question.
A set $\C\subset L^\Phi$ is said to be 
\begin{enumerate}
\item[(i)] 
\emph{monotone} if  $X_1\geq
X_2\in \C$ implies $X_1\in \C$, 
\item[(ii)] \emph{positively homogeneous} if $\lambda \C\subset \C$ for
any $\lambda\geq 0$, and 
\item[(iii)] \emph{additive} if $\C + \C \subset \C$.
 \end{enumerate}
A set that is positively homogeneous and additive is clearly convex.

\begin{lem}\label{main-lemma}
If $\Phi$ and $\Psi$ both fail the $\Delta_2$-condition, then 
$L^\Phi$ admits a monotone, positively homogeneous and additive subset $\C$
which
is order closed but not
$\sigma(L^\Phi,L^\Psi)$-closed. 
Furthermore, for any $X\in L^\Phi$, there exists $k\in\R$ so that $X-k\one
\notin \C$.
\end{lem}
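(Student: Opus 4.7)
The plan is to construct $\C$ as an acceptance-type cone $\C:=\cT(\mathcal{M})+(L^\Phi)_+$ combining two Banach-lattice copies of $\ell^\infty$ — one in $L^\Phi$ from the failure of $\Delta_2$ for $\Phi$, one in $L^\Psi$ from the failure of $\Delta_2$ for $\Psi$. Applying the argument in the proof of Proposition~\ref{cpp} yields a disjoint positive sequence $(X_n)\subset L^\Phi_+$ and a Banach-lattice isomorphism $\cT\colon\ell^\infty\to L^\Phi$ with $\cT e_n=X_n$, and dually a disjoint positive sequence $(Y_m)\subset L^\Psi_+$ and a Banach-lattice isomorphism $\mathcal{S}\colon\ell^\infty\to L^\Psi$ with $\mathcal{S} e_m=Y_m$. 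I would arrange both constructions over a common refining measurable partition of $\Omega$ so that, after normalization, the pairings $v_m:=(\E[X_n Y_m])_n$ satisfy $v_m=e_m\in\ell^1$; equation~\eqref{continuity} then shows the $L^\Psi$-action on $\cT(\ell^\infty)$ realizes exactly the natural $\ell^1$-action on $\ell^\infty$.

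Inside $\ell^\infty$ I would select a convex cone $\mathcal{M}$ satisfying: (i) $\mathcal{M}$ is order closed; (ii) $\mathcal{M}$ admits an Ostrovskii-type weak$^*$-accumulation point $w_*\in\overline{\mathcal{M}}^{\sigma(\ell^\infty,\ell^1)}\setminus\mathcal{M}$ not approached by any sequence from $\mathcal{M}$, with the additional property that no element of $\mathcal{M}$ dominates $w_*$ in $\ell^\infty$ (so $\cT w_*\notin\cT(\mathcal{M})+(L^\Phi)_+$); and (iii) some $u\in\ell^1_+$ with $\la u,w\ra\geq 0$ on $\mathcal{M}$, so that $Y:=\mathcal{S}u\in L^\Psi_+$ is a positive support functional for $\cT(\mathcal{M})$. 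Setting $\C:=\cT(\mathcal{M})+(L^\Phi)_+$, monotonicity is automatic, and positive homogeneity and additivity follow from $\mathcal{M}$ and $(L^\Phi)_+$ being cones. Order closure reduces, via the disjointness of $(X_n)$ and the lattice structure of $\cT$, to order closure of $\mathcal{M}$ after extracting a coordinate-wise $\ell^\infty$-bounded stabilization from an order-bounded sequence in $\C$. Failure of $\sigma(L^\Phi,L^\Psi)$-closure comes from $w_*$: via~\eqref{continuity}, a net $w_\alpha\to w_*$ in $\sigma(\ell^\infty,\ell^1)$ transports to $\cT w_\alpha\to\cT w_*$ in $\sigma(L^\Phi,L^\Psi)$, while $\cT w_*\notin\C$ by (ii). For the \emph{furthermore} clause, $Y=\mathcal{S}u$ satisfies $\E[Y]>0$ and $\inf_{Z\in\C}\E[ZY]=0$, so $\E[(X-k\one)Y]=\E[XY]-k\E[Y]\to-\infty$ forces $X-k\one\notin\C$ for large $k$.

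The main obstacle I expect is constructing the cone $\mathcal{M}$ with all three properties simultaneously — order closure in $\ell^\infty$, a non-sequentially-reached weak$^*$-accumulation point $w_*$ not dominated by any element of $\mathcal{M}$, and a positive $\ell^1_+$-support functional. In particular, executing Ostrovskii's non-sequential weak$^*$-density argument within a convex cone, rather than within a subspace, of $\ell^\infty$ is delicate, as is the compatibility between order closure of $\C$ and the presence of a weak$^*$-limit point outside $\C$. The simultaneous failure of $\Delta_2$ for both $\Phi$ and $\Psi$ is essential: $\Phi\notin\Delta_2$ is needed to embed $\ell^\infty$ as a closed sublattice of $L^\Phi$, while $\Psi\notin\Delta_2$ is needed so that $L^\Psi$-pairings with $\cT(\ell^\infty)$ recover the full $\ell^1$-action on $\ell^\infty$, ensuring the Ostrovskii pathology remains visible through the concrete dual pair $(L^\Phi,L^\Psi)$.
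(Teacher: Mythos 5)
Your plan has a fatal structural flaw: the Ostrovskii phenomenon cannot be upgraded to what the lemma requires. Ostrovskii's theorem produces a convex set $\mathcal{M}\subset\ell^\infty$ whose $\sigma(\ell^\infty,\ell^1)$-\emph{sequential closure} differs from its $\sigma(\ell^\infty,\ell^1)$-closure (this is exactly what Proposition~\ref{cpp} uses, and it only shows the two \emph{closures} differ). The lemma, however, demands a set that is itself order closed yet not weakly closed, and no convex cone $\mathcal{M}\subset\ell^\infty$ can be simultaneously $\sigma(\ell^\infty,\ell^1)$-sequentially closed and not $\sigma(\ell^\infty,\ell^1)$-closed: since $\ell^1$ is separable, bounded subsets of $\ell^\infty$ are weak$^*$ metrizable, so each $\mathcal{M}\cap k\mathcal{B}$ would be weak$^*$ closed, and the genuine Krein--Smulian theorem for $(\ell^\infty,\ell^1)=((\ell^1)^*,\ell^1)$ then forces $\mathcal{M}$ to be weak$^*$ closed. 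As the argument of Proposition~\ref{cpp} shows, order convergence of $(\cT w_k)$ inside the sublattice $\cT(\ell^\infty)$ is \emph{equivalent} to bounded coordinatewise (hence weak$^*$) convergence of $(w_k)$, so your requirements (i) and (ii) on $\mathcal{M}$ are mutually exclusive; your own reduction of order closedness of $\C$ to order closedness of $\mathcal{M}$ then puts $w_*\in\overline{\mathcal{M}}^{\sigma(\ell^\infty,\ell^1)}=\mathcal{M}$ and the construction collapses. (Separately, the sum $\cT(\mathcal{M})+(L^\Phi)_+$ need not be order closed even when both summands are, since the decomposition $U_p=\cT w_p+V_p$ of an order convergent sequence is uncontrolled; and your claim that $\Psi\notin\Delta_2$ is what makes the $\ell^1$-action on $\cT(\ell^\infty)$ surjective is not right --- that holds in any case by choosing norm bounded $Y_n$ supported in $\supp X_n$ and summing absolutely in $L^\Psi$.)

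The paper's construction is necessarily of a different kind: the defect must be produced by norm-unbounded approximating families (by Theorem~\ref{bdd-C} the pathology cannot occur in any ball). Concretely, the paper maps $L^\Phi$ into $\ell^\infty\oplus\R\oplus\ell^1(\N\times\N)$ using a disjoint summable system $(X_n)$ in $L^\Phi$ (from $\Phi\notin\Delta_2$) paired with $(Y_n)\subset L^\Psi$, and a disjoint system $(W_{ij})$ in $L^\Phi$ paired with a summable disjoint system $(Z_{ij})$ in $L^\Psi$ (from $\Psi\notin\Delta_2$); membership in $\C$ is defined by coupled inequalities, with weights $2^i$ and $4^i$ and the summing operator, linking the three components. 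Order closedness is then forced by the Schur property of the $\ell^1(\N\times\N)$ component together with the weights, while $-W_0$ lies in the weak closure because it is approximated by elements $2^s\sum_{n\ge r}X_n-W_0+2^{-s}W_{sr}$ whose norms blow up as $s\to\infty$ even though they are weakly close to $-W_0$. That escape to infinity in norm is the essential mechanism, and it is absent from your proposal.
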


\begin{proof}
Assume that both $\Phi$ and $\Psi$ fail the the $\Delta_2$-condition.
We claim that there are a norm bounded set of disjoint positive random variables
$\{X_n\}_{n\geq 1}\cup \{W_0\} \cup \{W_{ij}\}_{i,j \geq 1}$ in $L^\Phi$ and a
norm bounded set of disjoint positive random variables $\{Y_n\}_{n\geq 1}\cup
\{Z_0\}
\cup
\{Z_{ij}\}_{i,j\geq 1}$ in $L^\Psi$ such that 
\begin{enumerate}
\item[(a)] $\supp Y_n \subset\supp X_n$, $\supp W_0\subset \supp Z_0$ and
$\supp W_{ij}\subset \supp
Z_{ij}  $ for all $n,i,j\geq 1$,
\item[(b)] The pointwise sums $\widetilde{X}:= \sum_n X_n$ and $\widetilde{Z}: =
\sum_{i,j} Z_{ij}$ belong to $\lh$ and $\ls$ respectively,
\item[(c)] $\E[ X_nY_n] = \E[ W_0Z_0]= \E[W_{ij}Z_{ij}]=1$ for
all $n,i,j\geq 1$.
\end{enumerate}
Since $\mathbb{P} $ is nonatomic, there are three disjoint measurable
subsets $\Omega_1,\Omega_2,\Omega_3$ of $\Omega$, each of which is atomless and
has
positive
measure.  Choose any $0\leq W_0\in L^\Phi(\Omega_2)$ and $0\leq Z_0\in
L^\Psi(\Omega_2)$
such that  $\supp W_0\subset \supp Z_0$ and 
$\E[ W_0Z_0]=1$. 
Since  $\Phi$ fails the
$\Delta_2$-condition,
we may apply  \cite[pp.~139, Theorem~5]{RR:91} to $L^\Phi(\Omega_1)$ to  obtain
a sequence $(X_n)$ of normalized
disjoint positive random variables in
$L^\Phi$ such that $\widetilde{X}:= \sum_n X_n \in L^\Phi$.
Choose a norm bounded sequence $(Y_n) \subset \ls$ so that $\supp Y_n \subset
\supp X_n$ and that $\E[X_nY_n] = 1$ for all $n$.
Similarly, since $\Psi$ fails the $\Delta_2$-condition, 
there is a normalized
disjoint positive sequence $(Z_{ij})_{i,j\geq 1} \subset L^\Psi(\Om_3)$  so that
$\widetilde{Z}: =
\sum_{i,j} Z_{ij}\in L^\Psi$.
Then choose $(W_{ij}) \subset L^\Phi$ with the desired properties.

 For any $X\in L^\Phi$, $$ \sum_{i,j}
\bigabs{\E[XZ_{ij}]} \leq \E\big[ \bigabs{X\widetilde{Z}}\big]\leq
\norm{X}_\Phi\bignorm{\widetilde{Z}}_\Psi.$$
Thus the map $\mathcal{T}$ defined by 
$$ \mathcal{T}X = \Big(\E[ XY_n]\Big)_n \oplus \E[XZ_0] \oplus
\Big(\E[XZ_{ij}]\Big)_{ij}$$
is a bounded linear operator from $L^\Phi$ into $\ell^\infty \oplus \R \oplus
\ell^1(\N\times \N)$.
Clearly, $\mathcal{T}$ is a positive operator.
Define the {\em summing operator}  $\mathcal{S}: \ell^1 \to
\ell^\infty$ by $$\mathcal{S}\big((a_j)_j\big) =
\Big(\sum^n_{j=1}a_j\Big)_n.$$
For any $y = (y(i,j))_{i,j\geq 1} \in \ell^1(\N\times\N)$, put $y_i = (y(i,j))_j
\in \ell^1$ for any $i\geq 1$.
Let $\C$ be the subset of $L^\Phi$ consisting of all functions $X\in L^\Phi$ for
which, if we write $\mathcal{T}X= u\oplus a \oplus
v$, there are $\lambda \in \R$ and $ y\in \ell^1(\N\times \N)$ such
that 
\begin{align*}
&\lambda\geq 0, y\geq 0 \text{, and } \sum_i2^i\|y_i\|_1 =1,\\
 &a \geq -\lambda,\ v\geq \lambda y \text{, and } u \geq
\lambda\sum^l_{i=1}4^i\mathcal{S}y_i \text{ for all $l\geq 1$}. \end{align*}
If the above occurs, we write $X\sim (\lambda,y).$

\medskip

\noindent\underline{Claim I}: $\C$ is monotone, positively homogeneous and 
additive.

Indeed, if $X' \geq X\in \C$ and $X\sim (\lambda,y)$,
then clearly $X'\sim (\lambda,y)$ and $\mu X\sim (\mu\lambda,y)$ for any
$\mu\geq 0$, so that $X',\mu X\in
\C$. Now suppose that $X\sim (\lambda,y)$ and $X'\sim (\lambda',y')$.
Since $y,y'\geq 0$, it follows that
\[ \sum_i 2^i\norm{ \lambda y_i + \lambda' y'_i}_1
=\lambda\sum_i 2^i\norm{y_i}_1 +\lambda'\sum_i 2^i\norm{y'_i}_1 =
\lambda+\lambda'.\]
Thus we can find $0\leq y'' \in \ell^1(\N\times \N)$ such that $\sum_i
2^i\|y''_i\|_1
=
1$ and that 
\[(\lambda+\lambda') y'' =   \lambda y +  \lambda' y'.\]
%
%Indeed, if $\lambda+\lambda'>0$, then take $y''=\frac{\lambda y +  \lambda'
%y'}{\lambda+\lambda'}$; otherwise, we can take any $0\leq y'' \in
%\ell^1(\N\times \N)$ such that $\sum_i
%2^i\|y''_i\|_1
%=1$.
%
Let us  show that $X+X'\sim(\lambda+\lambda',y'')$.
Indeed, write $\mathcal{T}X = u\oplus a \oplus v$ and $\mathcal{T}X' = u' \oplus
a' \oplus v'$, then
$$\mathcal{T}(X+X')=(u+u')\oplus (a+a')\oplus (v+v').$$
Now $$a +a' \geq (-\lambda) +
(-\lambda') =
-(\lambda+\lambda'),$$
$$ v+v' \geq \lambda y +
\lambda' y' = (\lambda+\lambda') y'',$$ and
\begin{align*}
u+ u' \geq &\lambda\sum^l_{i=1}4^i\mathcal{S}y_i +
\lambda' \sum^l_{i=1}4^i\mathcal{S}y'_i = \sum^l_{i=1}4^i\mathcal{S}(\lambda y_i
+
\lambda' y_i')\\  
=& (\lambda+\lambda') \sum^l_{i=1}4^i\mathcal{S}y_i'', \quad\text{ for all
$l\geq 1$}.
\end{align*}
This proves that  $X+X'\sim(\lambda+\lambda',y'')$, so that $X +X'
\in \C$, as desired.

\medskip

Since order
intervals in $\ell^1$ are norm compact, for any norm convergent positive
sequence 
$(v_p)$ in $\ell^1$, the set $\cup_p [0,v_p]$ is relatively norm compact in
$\ell^1$.

\medskip

\noindent \underline{Claim II}: $\C$ is order closed in $L^\Phi$.

Let $(U_p)$ be a sequence in $\C$ that order converges to some $U\in L^\Phi$. We
want to show that $U \in
\C$.
Write $$\mathcal{T}U_p = u_p \oplus a_p \oplus
v_p\quad\mbox{and}\quad\mathcal{T}U = u\oplus a \oplus v.$$
For each $n\geq 1$,
denote by $x(n)$ the $n$-th coordinate of a vector $x$ in $\ell^\infty$. By
Dominated Convergence Theorem, for any $n\geq 1$,
$$\lim_p u_p(n)=\lim_p\E[U_pY_n] = \E[ UY_n] =u(n).$$ 
Moreover, since $(U_p)$ is order bounded, and therefore, norm bounded, in
$L^\Phi$, it is easy to see that
$(u_p)$ is norm bounded in $\ell^\infty$.  It follows that $(u_p)$
$\sigma(\ell^\infty,\ell^1)$-converges  to $u$.
Similarly, $(a_p)$ converges to $a$.
For any $(b_{ij}) \in \ell^\infty(\N\times \N)$,
\begin{align*}
\Bigabs{\sum_{i,j}b_{i,j}\E\big[ (U_p-U)Z_{ij}]}
&\leq \E\Big[
\abs{U_p-U}\sum_{i,j}\abs{b_{ij}}Z_{ij} \Big]\\
&  \leq\sup_{i,j}\abs{b_{ij}}\cdot  \E\big[ \abs{U_p-U} \widetilde{Z}\big]  \to
0. \end{align*}
Hence $(v_p)$ converges to $v$ with respect to the topology
$\sigma(\ell^1(\N\times \N),
\ell^\infty(\N\times
\N))$.
Since $\ell^1(\N\times \N)$ has the Schur property, 
$(v_p)$ norm converges to $v$.

For each $p$, suppose that  $U_p\sim (\lambda_p,y_p)$ and 
write $y_{pi} = (y_p(i,j))^\infty_{j=1}$ for each $i$.
Choose $M$ so that $\|u_p\|_\infty\leq M$ for all $p$.
If $l\geq 1$, then
\begin{equation}\label{eq1} M \geq u_p(n) \geq \lambda_p
\sum^l_{i=1}4^i\mathcal{S}y_{pi}(n) \to
\lambda_p
\sum^l_{i=1}4^i\norm{y_{pi}}_1 \text{ as $n\to\infty$}. \end{equation}
In particular, $ M \geq \lambda_p \sum_i 2^i \norm{y_{pi}}_1 = \lambda_p \geq
0$, so that $(\lambda_p)$ is a bounded sequence.
Take a subsequence if necessary to assume that $(\lambda_p)$ converges to some
$\lambda\geq 0$.
If $\lambda =0$, put $y$ to be any positive element in $\ell^1(\N\times \N)$
such
that $\sum_i 2^i\norm{y_i}_1 = 1$, where $y_i = (y(i,j))^\infty_{j=1}$.
Then it is easy to see that $a\geq -\lambda$, $v\geq \lambda y$ and $u \geq
\lambda\sum^l_{i=1}4^i\mathcal{S}y_i$
for all $l$.  Hence, $U\sim(\lambda,y)$, and $U\in \C$.

For the rest of the proof, assume that $\lambda> 0$.
Since $v_p \geq \lambda_py_p\geq 0$ for all $p$ and $(v_p)$ is norm convergent
in $\ell^1(\N\times \N)$, it follows that $(\lambda_py_p)$ is relatively norm
compact in
$\ell^1(\N\times\N)$.
Passing to a subsequence again, we may assume that $(\lambda_py_p)_p$ converges
in norm
to some $z$ in $\ell^1(\N\times \N)$. Set $y = \frac{z}{\lambda}$. 
Then $y\geq 0$, and $(y_p)$ converges to $y$ in norm. 
To complete the proof, we will verify that $U\sim (\lambda,y)$. Clearly, for any
$i\geq 1$, we
have $2^i\norm{y_{pi}}_1\rightarrow 2^i\norm{y_i}_1$ as $p\rightarrow \infty$.
Choose $p_0$ such that $\lambda_p \geq \frac{\lambda}{2}$ for all $p \geq
p_0$.
By $(\ref{eq1})$, if $p \geq p_0$,  then $0 \leq 2^i\norm{y_{pi}}_1 \leq
\frac{M}{\lambda
2^{i-1}}$ for any $i\geq 1$.
It follows from Dominated Convergence Theorem that
\begin{equation}\label{eq2}
\sum_i 2^i\norm{y_i}_1 = \lim_p\sum_i 2^i\norm{y_{pi}}_1= 1.
\end{equation}
Furthermore,
\begin{equation}\label{eq3}
a =\lim a_p \geq -\lim \lambda_p = -\lambda,
\end{equation}
and
\begin{equation}\label{eq4}
v = \lim_p v_p\geq \lim_p \lambda_py_p = \lambda y.\end{equation}
Finally, for each $n$ and each $i$, $\mathcal{S}y_{pi}(n) \to \mathcal{S}y_i(n)$
as $p\to \infty$.
Thus, for any $l$,
$$u(n)=\lim_pu_p(n) \geq \lim_p\lambda_p\sum^l_{i=1}4^i\mathcal{S}y_{pi}(n)=
\lambda\sum^l_{i=1}4^i\mathcal{S}y_i(n)  .$$
This proves that 
\begin{equation}\label{eq5}
 u \geq \lambda\sum^l_{i=1}4^i\mathcal{S}y_i \;\;\text{ for any $l$}.
\end{equation}
Clearly, (\ref{eq2})-(\ref{eq5}) show that $U\sim (\lambda,y)$, as desired.

\medskip

\noindent
\underline{Claim III}:
$-W_0 \in
\overline{\C}^{\sigma(L^\Phi,L^\Psi)}\backslash \C$ and thus  $\C$ is not
$\sigma(L^\Phi,L^\Psi)$-closed.  

Clearly 
\[ \mathcal{T}(-W_0) = 0\oplus -1 \oplus 0.\]
If $-W_0\in \C$, then there would exist $\lambda \geq 0$ and $0 \leq y \in
\ell^1(\N\times
\N)$ such that $-1 \geq -\lambda$, $0 \geq \lambda y$, and $\sum_i 2^i\|y_i\|_1
=1$, where $y_i = (y(i,j))_j$.
It follows that $\lambda \geq 1$, forcing $y =0$, which is impossible. This
proves that $-W_0 \notin \C$.

Next, we show that  $-W_0 \in \overline{\C}^{\sigma(L^\Phi,L^\Psi)}$.
Let  $V_1,\dots, V_l\in L^\Psi$ and $\varepsilon
> 0$ be given.  Set $V =
\frac{1}{\varepsilon}\sum^l_{t=1}\abs{V_t}$. Since $\sup_{ij}\E[W_{ij}V]\leq
(\sup_{ij}\norm{W_{ij}}_\Phi)\norm{V}_\Psi<\infty$, there exists $s\geq 1$ large
enough so that $$\E[W_{ij}V] < 2^{s-1}$$ for all $i,j$.
Since $\sum_n \E[X_nV]  \leq \E[\widetilde{X}V]  < \infty$, 
there exists $r\geq 1$ such that $$\sum^\infty_{n=r}\E[X_nV] 
<\frac{1}{2^{s+1}}.$$
Let $y \in \ell^1(\N\times \N)$ be defined by $y(i,j) = \frac{1}{2^s}$ if $(i,j)
= (s,r)$ and $0$ otherwise.
Simple computations show that if $$X :=
2^s\sum^\infty_{n=r}X_n -W_0 + \frac{W_{sr}}{2^s}$$ then $X\sim (1,y)$, so that 
$X\in \C$.
Moreover, if $1\leq t\leq l$, then
\begin{align*}
\Bigabs{\E[XV_t] - \E[(-W_0)V_t]}
&\leq\varepsilon\E\big[
\abs{X+W_0}V\big]\\&\leq \varepsilon 2^s\sum^\infty_{n=r}\E[X_n V] +
\frac{\varepsilon}{2^s}\E[
W_{sr}V]
 < \varepsilon.
\end{align*}
This proves that $-W_0 \in \overline{\C}^{\sigma(L^\Phi,L^\Psi)}$.

\medskip

Finally, suppose that there exists $X\in \lh$ 
so that $X-k\one \in \C$ for all $k \in\R$.
Let $u$ be the first component of
$\mathcal{T}X$.
Then
the first component of $\mathcal{T}(X-k\one)$ is $u-k(\E[ Y_n])_n$. 
Since $X-k\one \in \C$, $u-k(\E[ Y_n])_n \geq 0$.
Thus 
\[ \big(\E[Y_n]\big)_n\leq \frac{u}{k} \text{ for all $k\geq 1$},\]
from which it follows that 
$\E[Y_n]=0$ for all $n$, contrary to the choice of $Y_n$.
\end{proof}

We are now ready to present the main result of this section.

\begin{thm}\label{orlicz-all}
The following statements are equivalent for an Orlicz space $L^\Phi$ defined on
a nonatomic probability space.
\begin{enumerate}
\item Let $\rho:\lh\to(-\infty,\infty]$ be a coherent risk measure.  Then 
$\rho$ has the Fatou property if and only if there is a set $\cQ$ of nonnegative
random variables in $L^\Psi$, each having expectation $1$, such that
\[ \rho(X) = \sup_{Y\in \cQ}\E[-XY] \text{ for any $X\in L^\Phi$}.\]
\item Let $\rho:\lh\to(-\infty,\infty]$ be a proper convex functional. Define
$\rho^*(Y)=\sup_{X\in L^\Phi}(\E[XY] -\rho(X))$ for any $Y\in L^\Psi$.  Then 
$\rho$ has the Fatou property if and only if 
\[\rho(X)=\sup_{Y\in L^\Psi}(\E[XY]-\rho^*(Y)) \text{ for any $X\in L^\Phi$.}\]
\item Every order closed convex set in $L^\Phi$ is $\sigma(\lh,\ls)$-closed.
 \item Every $\sigma(L^\Phi,L^\Psi)$-sequentially closed
convex set in $L^\Phi$ is $\sigma(L^\Phi,L^\Psi)$-closed.
\item $\sigma(L^\Phi,L^\Psi)$ has the
Krein-Smulian property.
\item Either $\Phi$ or $\Psi$ satisfies the
$\Delta_2$-condition.
\end{enumerate}
\end{thm}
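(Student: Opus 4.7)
The plan is to close the cycle $(6) \Rightarrow (5) \Rightarrow (4) \Rightarrow (3) \Rightarrow (2) \Rightarrow (1) \Rightarrow (6)$, assembling the results already established in the section. The equivalences $(3) \Leftrightarrow (4) \Leftrightarrow (5)$ are immediate from Corollary~\ref{orlicz-closed}, which identifies, for convex sets, order closedness with $\sigma(L^\Phi,L^\Psi)$-sequential closedness and with $\sigma(L^\Phi,L^\Psi)$-closedness of every intersection $\C \cap k\mathcal{B}$. For $(3) \Rightarrow (2)$, each sublevel set of a proper convex $\rho$ is convex and, by Proposition~\ref{prop3.1}, is order closed precisely when $\rho$ has the Fatou property; (3) then upgrades order closedness to $\sigma(L^\Phi,L^\Psi)$-closedness, so $\rho$ is $\sigma$-lower semicontinuous and Proposition~\ref{prop3.2} supplies the Fenchel--Moreau representation. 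The reverse direction in (2) is automatic since a Fenchel--Moreau representable functional is a supremum of $\sigma$-continuous affine functionals. Finally, $(2) \Rightarrow (1)$ is the specialization of (2) to coherent $\rho$ combined with the equivalence (2)$\Leftrightarrow$(3) of Proposition~\ref{prop3.2}.

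For $(6) \Rightarrow (5)$ I split into two cases. If $\Phi$ satisfies the $\Delta_2$-condition, then Proposition~\ref{cpp} yields $\overline{\C}^{o} = \overline{\C}^{\sigma(L^\Phi,L^\Psi)}$ for every convex $\C$, so (3), and hence (5), holds. If instead $\Psi$ satisfies the $\Delta_2$-condition, then $L^\Psi = H^\Psi$, and the identification $L^\Psi = (H^\Phi)^*$ recorded in Section~2 has the symmetric counterpart $L^\Phi = (H^\Psi)^* = (L^\Psi)^*$, so that $\sigma(L^\Phi,L^\Psi)$ is the weak$^*$ topology on a Banach dual; the classical Krein--Smulian theorem then delivers (5) directly.

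The only serious step, $(1) \Rightarrow (6)$, I would argue by contrapositive using Lemma~\ref{main-lemma}. Supposing both $\Phi$ and $\Psi$ fail the $\Delta_2$-condition, that lemma furnishes a monotone, positively homogeneous and additive set $\C \subset L^\Phi$ which is order closed but not $\sigma(L^\Phi,L^\Psi)$-closed, with the additional feature that for every $X \in L^\Phi$ some $k \in \R$ satisfies $X - k\one \notin \C$. I would set
\[
\rho(X) := \inf\{m \in \R : X + m\one \in \C\},
\]
and check that the structural properties of $\C$ translate into subadditivity, positive homogeneity, monotonicity, and cash-additivity of $\rho$. Taking $\lambda = 0$ in the definition of $\C$ shows $0 \in \C$, so $\rho(0) \leq 0$. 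The set $\{m \in \R : X + m\one \in \C\}$ is upward closed by monotonicity of $\C$ and closed in $\R$ by order closedness of $\C$ (since $m_n \to m$ yields $X + m_n\one \xrightarrow{o} X + m\one$); combined with the ``furthermore'' clause, this forces $\rho(X) > -\infty$ for every $X$, so $\rho$ is a proper coherent risk measure. Since $\{\rho \leq \lambda\} = \C - \lambda\one$ is a translate of an order closed but not $\sigma(L^\Phi,L^\Psi)$-closed set, $\rho$ has the Fatou property but fails to be $\sigma$-lower semicontinuous, so Proposition~\ref{prop3.2} rules out the dual representation in (1), contradicting~(1). The main technical obstacle has already been absorbed by Lemma~\ref{main-lemma} and Theorem~\ref{bdd-C}; the present theorem is essentially their organizational consequence.
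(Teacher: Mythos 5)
Your proposal is correct and follows essentially the same route as the paper: the equivalences $(3)\Leftrightarrow(4)\Leftrightarrow(5)$ via Corollary~\ref{orlicz-closed}, the chain $(3)\Rightarrow(2)\Rightarrow(1)$ via Propositions~\ref{prop3.1} and~\ref{prop3.2}, the two $\Delta_2$ cases for $(6)\Rightarrow(5)$, and the contrapositive of $(1)\Rightarrow(6)$ built on the set $\C$ of Lemma~\ref{main-lemma} with $\rho(X)=\inf\{m:X+m\one\in\C\}$. The only cosmetic difference is that for $(6)\Rightarrow(5)$ with $\Phi\in\Delta_2$ you route through Proposition~\ref{cpp} to get (3) first, while the paper invokes the Krein--Smulian property of the weak topology directly; both are fine.
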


\begin{proof}
(3) $\implies$  (2) $\implies$ (1) follows from Propositions \ref{prop3.1} and
\ref{prop3.2} and the observation that every $\sigma(\lh,\ls)$-closed set is
order closed.
By Corollary \ref{orlicz-closed}, we have (3) $\iff$ (4) $\iff$ (5).
 If $\Phi$ satisfies the
$\Delta_2$-condition, then
$\sigma(L^\Phi,L^\Psi)$ is the weak topology, which has
the Krein-Smulian property. If $\Psi$ satisfies the
$\Delta_2$-condition, then
$L^\Phi$ is the norm dual of $L^\Psi$, and
$\sigma(L^\Phi,L^\Psi)$ is the weak$^*$ topology, which
has the Krein-Smulian property by the Krein-Smulian Theorem.
This shows that (6) $\implies$ (5).

Finally,  suppose
that $\Phi$ and $\Psi$ both fail the $\Delta_2$-condition. Let $\C$ be the set
obtained by applying Lemma \ref{main-lemma}.  Define $\rho: L^\Phi\to
(-\infty,\infty]$ by 
$$\rho(X)=\inf\{m\in
\mathbb{R}:X+m\one\in
\C\}.$$
Using the properties of the set $\C$, it is easy to check that $\rho$ is a
coherent risk measure.
Clearly, $\C\subset \{\rho\leq 0\}$ and  $\{\rho<0\}\subset \C$ by
monotonicity of $\C$.
It follows from the order closedness of  $\C$ that $X\in \C$ if  $\rho(X)=0$.
Therefore, $\{\rho\leq 0\}=\C$, so that $$\{\rho\leq
m\}=\C-m\one\;\;\mbox{ for any }m\in\R.$$ 
Thus $\{\rho\leq m\}$ is order closed for all $m$.  By Proposition
\ref{prop3.1}, $\rho$ has the Fatou property. If condition (1) holds, then from
the representation 
\[ \rho(X) = \sup_{Y\in \cQ}\E[-XY] \text{ for any $X\in L^\Phi$},\]
it is clear that $\C = \{\rho\leq 0\}$ is $\sigma(\lh,\ls)$-closed, contrary to
the choice of $\C$.
This proves that (1) $\implies$ (6).
\end{proof}

\begin{remark}
In the special case where  $\Psi$ satisfies the $\Delta_2$-condition,
Theorem~\ref{bdd-C} was announced by 
Delbaen and Owari at the Vienna
Congress on Mathematical Finance, September 12-14, 2016.
From it they obtained the dual representation result Theorem~\ref{orlicz-all}
(2) {\em when $\Psi$ satisfies the $\Delta_2$-condition}.
Their paper, which appeared in the ArXiv in November 2016, was preceded by the
first version of the present paper (ArXiv October 2016), which already contained
the complete results Theorems \ref{bdd-C} and \ref{orlicz-all}.
\end{remark}

%%%%%%%%%%%%%%%%%%%%%%%%%%%%%%%%%%%%%

\section{Dual representation with respect to the pair
$(H^\Phi,H^\Psi)$}\label{sec3}

Besides the duality  $(\lh,\ls)$ considered in \S 3, duality theory for risk
measures on $\hh$  with respect to the pair $(\hh,\ls)$ was studied in
\cite{CL:09}, based on the fact that $\ls=(\hh)^*$. We are motivated to study
dual representations of risk measures on $\hh$ using the smaller and more
manageable space $\hs$ as the dual.
%For this purpose we need to consider a stronger version of 
%the Fatou property of risk measures.
Despite Theorem~\ref{orlicz-all} asserts that coherent risk measures on $\lh$
with the Fatou property may fail a
dual representation with respect to the pair $(\lh,\ls)$, the paper
\cite{GX:16a} established a duality theory for risk measures on
$L^\Phi$ with respect to the pair $(\lh,\hs)$ whenever $\lh\neq L^1$.  In this
context, $\hs$
 consists precisely of all random variables $Y$ such that
$$(X_n) \mbox{ is norm bounded in }
\lh,\;X_n\xrightarrow{a.s.}X\quad\Longrightarrow\quad \E[X_nY]\rightarrow
\E[XY].$$
This observation motivated the introduction of a stronger version of the Fatou
property.   A functional $\rho:\hh\rightarrow(-\infty,\infty]$
satisfies the \emph{strong Fatou property} if 
$$(X_n) \mbox{ is norm bounded in }
\hh,\;X_n\xrightarrow{a.s.}X\quad\Longrightarrow\quad\rho(X)\leq \liminf_n
\rho(X_n).$$
In this section, to complement and complete the three cases mentioned above, we
investigate the connection between the strong Fatou property of a coherent risk
measure or a proper convex functional on $H^\Phi$ and its dual representation
with respect to the duality $(\hh,\hs)$.
Say that a set $\C\subset H^\Phi$ is {\em boundedly a.s.\ closed} if for any
norm bounded sequence $(X_n)$ in $\C$ that a.s.\ converges to some $X\in
H^\Phi$,   $X\in \C$.

\begin{prop}\label{uo-tech}
Let $\C$  be a convex set in $\hh$ and let $X \in \hh$.  If
$X\in\overline{\C}^{\sigma(\hh,\hs)}$,
then $X$ is the a.s.-limit of a sequence in $\C$.
The converse holds if $\C$ is norm bounded. Therefore, a norm bounded
convex set $\C$ in $\hh$ is $\sigma(\hh,\hs)$-closed if and only if 
it is boundedly a.s.\ closed.
\end{prop}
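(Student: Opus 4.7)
The plan is to route both directions through the absolute weak topology $|\sigma|(\hh,\hs)$. The key preliminary observation is that $\hs$ is a lattice ideal in $(\hh)^*=\ls$ and contains $L^\infty$ (hence separates the points of $\hh$); combining Theorem~3.50 of~\cite{AB:06} with Mazur's Theorem, this gives that on convex sets $\overline{\C}^{\sigma(\hh,\hs)}=\overline{\C}^{|\sigma|(\hh,\hs)}$. Now $|\sigma|(\hh,\hs)$ is generated by the seminorms $X\mapsto\E[|X|Y]$ with $0\leq Y\in\hs$, and the lever that makes the whole argument work is that $Y=\one\in L^\infty\subset\hs$ recovers the $L^1$-seminorm, so that absolute-weak approximation can be promoted to a.s.\ convergence along a subsequence.

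For the first assertion, I would translate $\C$ by $-X$ so that, without loss of generality, $X=0$. From $0\in\overline{\C}^{|\sigma|(\hh,\hs)}$, testing against $\one$ extracts a sequence $(X_n)\subset\C$ with $\E[|X_n|]<1/n$, and a standard Riesz subsequence argument then yields an a.s.-convergent subsequence with limit $0$. For the converse, I assume $\C$ is norm bounded and take $(X_n)\subset\C$ with $X_n\xrightarrow{a.s.}X$. Since $(X_n)$ is norm bounded in $\lh$, the characterization of $\hs$ recalled at the start of this section forces $\E[X_nY]\to\E[XY]$ for every $Y\in\hs$, so $X\in\overline{\C}^{\sigma(\hh,\hs)}$. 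The final equivalence for norm-bounded convex $\C$ then combines the two directions: norm-boundedness of $\C$ keeps the approximating sequence from the first direction inside a norm-bounded set, and the hypothesis $X\in\hh$ built into the definition of boundedly a.s.\ closed makes it eligible.

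I do not anticipate a serious technical obstacle. The most delicate preliminary is verifying the ideal/separation hypotheses needed to apply Theorem~3.50 of~\cite{AB:06}—that $\hs$ is an ideal in $\ls$ separating the points of $\hh$—so that the $\sigma(\hh,\hs)$- and $|\sigma|(\hh,\hs)$-closures of convex sets really do coincide, after which the rest is essentially packaging.
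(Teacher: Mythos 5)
Your proposal is correct and follows essentially the same route as the paper: both directions go through the identification $\overline{\C}^{\sigma(\hh,\hs)}=\overline{\C}^{\abs{\sigma}(\hh,\hs)}$ via \cite[Theorem~3.50]{AB:06} and Mazur's Theorem, with the forward direction obtained by testing against $\one\in\hs$ and extracting an a.s.-convergent (sub)sequence from $L^1$-approximation. The only cosmetic difference is in the converse: you invoke the characterization of $\hs$ recalled at the start of Section~4, whereas the paper proves that step from scratch (splitting $\E[\abs{X_n-X}\abs{Y}]$ using \cite[Theorem~4.18]{AB:06} and dominated convergence), which amounts to the same content.
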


\begin{proof}
Let $\C$ be a convex set in $\hh$ and let $X \in \ol{\C}^{\sigma(\hh,\hs)}$. 
Since $\hs$ is a lattice ideal of $\ls=(\hh)^*$, by \cite[Theorem~3.50]{AB:06},
the topological dual of $\hh$ under
$\abs{\sigma}(\hh,\hs)$ is precisely $\hs$.
Thus, by Mazur's Theorem, $$X\in
\overline{\C}^{{\sigma}(\hh,\hs)}=\overline{\C}^{\abs{\sigma}(\hh,\hs)}.$$
Consequently, since $\one\in \hs$, there is a sequence $(X_n) $ in $\C$ such
that 
$$\E[\abs{X_n-X}]\leq \frac{1}{2^n}$$
for all $n\geq 1$.
Since $$\sup_{n\leq m\leq k}(\abs{X_{m}-X}\wedge \one)\big\uparrow_k\,
\sup_{m\geq n}(\abs{X_{m}-X}\wedge \one),$$
it follows that 
\begin{align*}
\E\Big[\sup_{m\geq n}&(\abs{X_{_m}-X}\wedge \one)\Big]=
\lim_k \E\Big[\sup_{n\leq m\leq k}(\abs{X_{_m}-X}\wedge \one)\Big]\\
&\leq\lim_k\E\Big[\sum_{n\leq m\leq k}\abs{X_{_m}-X}\Big]
\leq \frac{1}{2^{n-1}}.
\end{align*}
Therefore, $$\E\Big[\inf_n\sup_{m\geq n}(\abs{X_{_m}-X}\wedge \one)\Big]=0,$$
so that $\inf_n\sup_{m\geq n}(\abs{X_{_m}-X}\wedge \one)=0$. It follows that 
$X_{n}\xrightarrow{a.s.}X$. 

Suppose that $\C$ is a norm bounded convex set  in $H^\Phi$ and that $(X_n)$ is
a sequence in $\C$ that converges a.s.\ to some $X\in \hh$.
Since $L^\Phi=(\hs)^*$ and $\hs$ has order continuous norm,  by
\cite[Theorem~4.18]{AB:06},
for any $Y\in \hs$ and any $\varepsilon>0$,
there exists $X_0\in \lh$ such that
$$\E\big[(\abs{X_n-X}-X_0)^+\abs{Y}\big]<\varepsilon.$$
Therefore, 
\begin{align*}
\bigabs{\E[(X_n-X)Y]}&\leq \E\big[\abs{X_n-X}\abs{Y}\big]
\\
&=
\E\big[(\abs{X_n-X}\wedge
X_0)\abs{Y}\big]+\E\big[(\abs{X_n-X}-X_0)^+\abs{Y}\big]\\
& \leq \E\big[(\abs{X_n-X}\wedge X_0)\abs{Y}\big]+
\varepsilon.
\end{align*}
By Dominated Convergence Theorem, $ \E\big[(\abs{X_n-X}\wedge
X_0)\abs{Y}\big]\rightarrow 0$, and thus
$\limsup_n\bigabs{\E[(X_n-X)Y]}\leq\varepsilon$. Since  $\varepsilon> 0$ is
arbitrary, we conclude that $(X_n)$ $\sigma(\hh,\hs)$-converges to $X$
\end{proof}

The next corollary is the counterpart of Corollary \ref{orlicz-closed} and can
be proved similarly.

\begin{cor}\label{uo-closed-all}Denote by $\mathcal{B}$ the closed unit
ball of $\hh$.
For a convex set $\C$ in $\hh$, the following are
equivalent:
\begin{enumerate}
 \item\label{uo-closed-all1} $\C$ is boundedly a.s.\ closed.
\item\label{uo-closed-all2} $\C$ is $\sigma(\hh,\hs)$-sequentially closed,
\item\label{uo-closed-all3} $\C\cap k\mathcal{B}$ is
$\sigma(\hh,\hs)$-closed for all
$k\geq 1$.
\end{enumerate}
\end{cor}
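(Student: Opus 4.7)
My approach would be to mirror the three-way cycle used in Corollary~\ref{orlicz-closed}, with Proposition~\ref{uo-tech} now taking the role that Theorem~\ref{bdd-C} played before; specifically, I would establish (2) $\Longrightarrow$ (1) $\Longrightarrow$ (3) $\Longrightarrow$ (2). The two halves of Proposition~\ref{uo-tech} are the essential tools: the first half converts $\sigma(\hh,\hs)$-closure membership into a.s.-limits of sequences in $\C$, while the second half converts a.s.-limits back to $\sigma(\hh,\hs)$-limits, but only under a norm boundedness hypothesis.

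For (2) $\Longrightarrow$ (1), given a norm bounded sequence $(X_n) \subset \C$ with $X_n \xrightarrow{a.s.} X \in \hh$, the converse direction of Proposition~\ref{uo-tech} yields $X_n \xrightarrow{\sigma(\hh,\hs)} X$, and the $\sigma(\hh,\hs)$-sequential closedness in (2) places $X$ in $\C$. For (1) $\Longrightarrow$ (3), fix $k \geq 1$; since $\C \cap k\mathcal{B}$ is norm bounded, the first half of Proposition~\ref{uo-tech} shows that any $X$ in its $\sigma(\hh,\hs)$-closure is the a.s.-limit of a sequence $(X_n) \subset \C \cap k\mathcal{B}$; Fatou's Lemma applied to $\Phi(\abs{X_n}/k)$ gives $\norm{X}_\Phi \leq k$, and the boundedly a.s.\ closedness in (1) then puts $X$ in $\C \cap k\mathcal{B}$. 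Finally, for (3) $\Longrightarrow$ (2), any $\sigma(\hh,\hs)$-convergent sequence $(X_n) \subset \C$ is norm bounded in $\hh$ by the Uniform Boundedness Principle applied in $\lh = (\hs)^*$, so $(X_n) \subset \C \cap k\mathcal{B}$ for some $k$, and (3) puts the limit in $\C$.

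The only real subtlety is bookkeeping: since norm boundedness is required to pass from a.s.\ to $\sigma(\hh,\hs)$-convergence but not vice versa, each implication must be routed through the slices $\C \cap k\mathcal{B}$ so as to invoke the correct half of Proposition~\ref{uo-tech}. No deeper difficulty arises, which is presumably why the authors remark that the proof proceeds ``similarly'' to Corollary~\ref{orlicz-closed}.
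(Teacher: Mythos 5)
Your proof is correct and is exactly the argument the paper intends: the authors omit the proof, remarking only that it is ``similar'' to that of Corollary~\ref{orlicz-closed}, and your three-step cycle (2)$\Rightarrow$(1)$\Rightarrow$(3)$\Rightarrow$(2), with the two halves of Proposition~\ref{uo-tech} replacing the roles played by dominated convergence and Theorem~\ref{bdd-C} in the $L^\Phi$ setting, is the natural instantiation of that remark. The bookkeeping points you flag (Fatou's Lemma to keep the a.s.\ limit inside $k\mathcal{B}$, and uniform boundedness via $L^\Phi=(\hs)^*$ for (3)$\Rightarrow$(2)) are exactly the right ones.
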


\begin{thm}\label{orliczh}
The following statements are equivalent for an Orlicz heart $H^\Phi$ defined on
a nonatomic probability space.
\begin{enumerate}
\item Let $\rho:\hh\to(-\infty,\infty]$ be a coherent risk measure.  Then 
$\rho$ has the strong Fatou property if and only if there is a set $\cQ$ of
nonnegative random variables in $H^\Psi$, each having expectation $1$, such that
\[ \rho(X) = \sup_{Y\in \cQ}\E[-XY] \text{ for any $X\in H^\Phi$}.\]
\item Let $\rho:\hh\to(-\infty,\infty]$ be a proper convex functional. Define
$\rho^*(Y)=\sup_{X\in H^\Phi}(\E[XY] -\rho(X))$ for any $Y\in H^\Psi$.  Then 
$\rho$ has the strong Fatou property if and only if 
\[\rho(X)=\sup_{Y\in H^\Psi}(\E[XY]-\rho^*(Y)) \text{ for any $X\in H^\Phi$.}\]
\item Every boundedly a.s.\ closed convex set in $H^\Phi$ is
$\sigma(\hh,\hs)$-closed. 
 \item Every $\sigma(H^\Phi,H^\Psi)$-sequentially closed
convex set in $H^\Phi$ is $\sigma(H^\Phi,H^\Psi)$-closed.
\item $\sigma(H^\Phi,H^\Psi)$ has the
Krein-Smulian property.
\item Either $\Phi$ or $\Psi$ satisfies the
$\Delta_2$-condition.
\end{enumerate}
\end{thm}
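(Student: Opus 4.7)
The plan is to mirror the structural argument of Theorem \ref{orlicz-all}. The chain (3) $\Rightarrow$ (2) $\Rightarrow$ (1) follows from $H^\Phi$-analogues of Propositions \ref{prop3.1} and \ref{prop3.2}. The analogue of Proposition \ref{prop3.1} is a verbatim adaptation: $\rho$ has the strong Fatou property if and only if every sublevel set is boundedly a.s.\ closed. For Proposition \ref{prop3.2}, one runs Fenchel--Moreau duality in the pair $(H^\Phi,H^\Psi)$; since $H^\Psi$ is a lattice ideal of $L^\Psi=(H^\Phi)^*$ (already used in the proof of Proposition \ref{uo-tech}), \cite[Theorem~3.50]{AB:06} identifies the topological dual of $H^\Phi$ under $|\sigma|(H^\Phi,H^\Psi)$ with $H^\Psi$, so Mazur's Theorem yields the equivalence between $\sigma(H^\Phi,H^\Psi)$-lower semicontinuity and the Fenchel--Moreau representation; the extraction of $\cQ$ for a coherent risk measure proceeds exactly as in Proposition \ref{prop3.2}. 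The cycle (3) $\Leftrightarrow$ (4) $\Leftrightarrow$ (5) is immediate from Corollary \ref{uo-closed-all} together with the fact that every $\sigma(H^\Phi,H^\Psi)$-convergent sequence is norm bounded.

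For (6) $\Rightarrow$ (5) there are two subcases. If $\Psi$ satisfies $\Delta_2$, then $H^\Psi=L^\Psi=(H^\Phi)^*$, so $\sigma(H^\Phi,H^\Psi)$ is the weak topology on $H^\Phi$ and the Krein--Smulian property is trivial. If $\Phi$ satisfies $\Delta_2$, then $H^\Phi=L^\Phi=(H^\Psi)^*$, and $\sigma(H^\Phi,H^\Psi)$ is the weak$^*$ topology on $H^\Phi$ viewed as the Banach dual of $H^\Psi$, so the classical Krein--Smulian Theorem applies.

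To close the loop I would prove (1) $\Rightarrow$ (6) contrapositively. Assume that both $\Phi$ and $\Psi$ fail the $\Delta_2$-condition. The plan is to construct a monotone, positively homogeneous, additive set $\C\subset H^\Phi$ that is boundedly a.s.\ closed but not $\sigma(H^\Phi,H^\Psi)$-closed, and that has the translation property: for every $X\in H^\Phi$ there exists $k\in\R$ with $X-k\one\notin \C$. Then $\rho(X):=\inf\{m\in\R:X+m\one\in\C\}$ is a coherent risk measure with $\{\rho\leq 0\}=\C$ (the nontrivial inclusion $\{\rho\leq 0\}\subset \C$ uses boundedly a.s.\ closedness applied to the sequence $X+\tfrac{1}{n}\one\in\C$), so $\{\rho\leq m\}=\C-m\one$ is boundedly a.s.\ closed for every $m$ and $\rho$ has the strong Fatou property. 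If (1) held, $\C$ would be $\sigma(H^\Phi,H^\Psi)$-closed, a contradiction.

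The construction of $\C$ parallels Lemma \ref{main-lemma}: disjoint positive blocks $\{X_n\}, W_0, \{W_{ij}\}\subset H^\Phi$ and $\{Y_n\}, Z_0, \{Z_{ij}\}\subset H^\Psi$ with matching supports, $\widetilde{X}=\sum_n X_n\in H^\Phi$ and $\widetilde{Z}=\sum_{ij} Z_{ij}\in H^\Psi$, an operator $\cT X = (\E[XY_n])_n \oplus \E[XZ_0] \oplus (\E[XZ_{ij}])_{ij}$, and $\C$ defined by the same inequalities $a\geq -\lambda$, $v\geq \lambda y$, $u\geq \lambda\sum_{i=1}^l 4^i\mathcal Sy_i$. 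Once $\cT$ is in hand, the three Claims transfer: monotonicity/positive homogeneity/additivity are formal; boundedly a.s.\ closedness plays the role of order closedness, verified by the same diagonal argument in which convergence of $\cT U_p$ along norm bounded a.s.\ convergent sequences is supplied by $Y_n,Z_0,Z_{ij}\in H^\Psi$ (as in the proof of Proposition \ref{uo-tech}), with the Schur-type step still delivering $\ell^1$-norm convergence in the $v$-component; Claim III, that $-W_0\in\overline{\C}^{\sigma(H^\Phi,H^\Psi)}\setminus\C$, tests only against $H^\Psi$-functionals and uses $\widetilde{X}\in H^\Phi$. The genuine obstacle, and the main technical work, is that the $\ell^\infty$-disjoint sequence furnished by \cite[pp.~139, Theorem~5]{RR:91} lies in $L^\Phi\setminus H^\Phi$; I must therefore choose the building blocks with rapidly decaying norms inside the hearts (forcing $\|Y_n\|_\Psi\to\infty$) while calibrating the supports $\supp X_n$ to be of rapidly shrinking measure, exploiting the nonatomicity of $\mathbb P$ and the $\Delta_2$-failure of both $\Phi$ and $\Psi$, so as to preserve the pathology of Lemma \ref{main-lemma} and simultaneously keep $\cT$ bounded into $\ell^\infty\oplus\R\oplus\ell^1$.
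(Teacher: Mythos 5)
Your outer structure matches the paper: the chain (3)$\Rightarrow$(2)$\Rightarrow$(1) via heart analogues of Propositions \ref{prop3.1} and \ref{prop3.2}, the cycle (3)$\Leftrightarrow$(4)$\Leftrightarrow$(5) from Corollary \ref{uo-closed-all}, and the two weak/weak$^*$ subcases for (6)$\Rightarrow$(5) are all exactly as in the paper. The genuine gap is in the construction for (1)$\Rightarrow$(6). You propose to keep the operator $\cT$ into $\ell^\infty\oplus\R\oplus\ell^1(\N\times\N)$ and the defining inequalities of Lemma \ref{main-lemma} verbatim, asserting that ``the Schur-type step still deliver[s] $\ell^1$-norm convergence in the $v$-component.'' That step does not transfer. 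In Lemma \ref{main-lemma}, the weak convergence $v_p\to v$ in $\sigma(\ell^1(\N\times\N),\ell^\infty(\N\times\N))$ --- the hypothesis fed into the Schur property --- comes from the estimate $\bigabs{\sum_{i,j} b_{ij}\E[(U_p-U)Z_{ij}]}\le \sup_{i,j}\abs{b_{ij}}\,\E[\abs{U_p-U}\widetilde Z]$, which is a dominated convergence argument using the order bound of $(U_p)$. For a merely norm bounded a.s.\ convergent sequence, making this work would require $\widetilde Z=\sum_{i,j}Z_{ij}\in H^\Psi$; but a disjoint positive family with $\E[W_{ij}Z_{ij}]=1$ and $(W_{ij})$ norm bounded has $\norm{Z_{ij}}_\Psi$ bounded away from $0$, so its sum can never lie in the order continuous space $H^\Psi$. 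Without norm convergence of $(v_p)$, and hence of $(y_p)$, the normalization $\sum_i2^i\norm{y_{pi}}_1=1$ can be lost in the limit (mass escapes to infinity in $j$), and bounded a.s.\ closedness of $\C$ is not established. Your proposed calibration forcing $\norm{Y_n}_\Psi\to\infty$ makes matters worse: it destroys the boundedness of the first component of $\cT$ in $\ell^\infty$.

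The paper's Lemma \ref{lem4.4} resolves this by redesigning the set rather than recalibrating the blocks. All blocks are kept norm bounded inside the hearts by truncation (only the pointwise sums $\widetilde X,\widetilde Z$ live in $L^\Phi,L^\Psi$); the third family is single-indexed so that $\cT$ maps into $c_0\oplus\R\oplus\ell^1$ (the first component lands in $c_0$ because the $Y_n$ are disjoint and norm bounded in $H^\Psi$); and the inequalities defining $\C$ are rewritten using the summing basis $(s_j)$ of $c_0$ and the standard basis $(w_j)$ of $\ell^1$, namely $u\ge\lambda\sum_j\big(\sum_i y(i,j)\big)s_j$ and $v\ge\lambda\sum_j\big(\sum_i4^iy(i,j)\big)w_j$. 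The point is that for a norm bounded a.s.\ convergent sequence the $u$-components converge in $\sigma(c_0,\ell^1)$, so after passing to convex combinations they converge in $c_0$-norm; the resulting uniform-in-$p$ decay of $\sum_i\norm{\mathcal P_jy_{pi}}_1$ as $j\to\infty$ is what prevents the escape of mass and recovers $\sum_i2^i\norm{y_i}_1=1$ in the limit, replacing the Schur argument entirely. You would need to supply this (or an equivalent) mechanism; the direct port of Lemma \ref{main-lemma} does not close.
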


\begin{proof}[Sketch of proof]
We omit the proofs of the   implications (5) $\implies$ (4) $\implies$ (3)
$\implies$ (2) $\implies$ (1)  as they are similar to the proofs of the
corresponding
implications in Theorem~\ref{orlicz-all}.

If $\Phi$ satisfies the $\Delta_2$-condition, then $H^\Phi=L^\Phi$ is the dual
space of $H^\Psi$, so
that $\sigma(H^\Phi,H^\Psi)$ is a weak$^*$ topology, which has the
Krein-Smulian property by the Krein-Smulian Theorem.
If $\Psi$ satisfies the $\Delta_2$-condition, then
$H^\Psi=L^\Psi$ is the
norm dual of $H^\Phi$, so that $\sigma(H^\Phi,H^\Psi)$ is a weak topology and
also has the Krein-Smulian property. This proves
(6)$\implies$(5).

Once again, the proof that (1) $\implies$ (6) relies on a construction, whose
verification we postpone for the moment.  If $\Phi$ and $\Psi$ both fail the
$\Delta_2$-condition,  let $\C$ be a subset of $H^\Phi$ as given by Lemma
\ref{lem4.4}.  Then the functional $\rho$ defined on $H^\Phi$ by
\[\rho(X)=\inf\{m\in
\mathbb{R}:X+m\one\in
\C\} \]
is a coherent risk measure on $H^\Phi$ that has the strong Fatou property (due
to the bounded a.s.\ closedness of $\C$) but is not $\sigma(\hh,\hs)$-lower
semicontinuous.  Thus $\rho$ cannot be represented as in condition (1).
\end{proof}

\begin{lem}\label{lem4.4}
Assume that  $\Phi$ and $\Psi$ both fail the $\Delta_2$-condition.
There is a monotone, positively homogeneous and additive subset $\C$
of $H^\Phi$ which is boundedly a.s.\ closed but fails to be
$\sigma(H^\Phi,H^\Psi)$-closed.
\end{lem}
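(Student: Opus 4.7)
The plan is to mimic Lemma~\ref{main-lemma} inside $\hh$, substituting Proposition~\ref{uo-tech} for the dominated-convergence arguments that in the $\lh$ proof relied on order boundedness. Since both $\Phi$ and $\Psi$ fail the $\Delta_2$-condition, I would apply \cite[p.~139, Theorem~5]{RR:91} on disjoint nonatomic subsets of $\Om$ to construct the same biorthogonal systems as in Lemma~\ref{main-lemma}: norm bounded disjoint positive $\{X_n\}\cup\{W_0\}\cup\{W_{ij}\}\subset L^\infty\subset \hh$ with $\widetilde{X}:=\sum_n X_n\in\lh$, and norm bounded disjoint positive $\{Y_n\}\cup\{Z_0\}\cup\{Z_{ij}\}\subset L^\infty\subset \hs$ with $\widetilde{Z}:=\sum_{ij}Z_{ij}\in\ls$, both satisfying the support and biorthogonality conditions (a)--(c) of that lemma. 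The pointwise sums $\widetilde{X}$ and $\widetilde{Z}$ necessarily escape the hearts, because order continuity of $\hh$ and $\hs$ forbids positive disjoint normalized sums, but they still yield bounded H\"older pairings with $\hh$ and $\hs$. With this data fixed, I would define $\mathcal{T}:\hh\to\ell^\infty\oplus\R\oplus\ell^1(\N\times\N)$ and $\C\subset\hh$ verbatim as in Lemma~\ref{main-lemma}.

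Monotonicity, positive homogeneity, and additivity of $\C$ follow from the same computation as in Claim~I of Lemma~\ref{main-lemma}. For bounded a.s.\ closedness, given norm bounded $(U_p)\subset\C$ with $U_p\xrightarrow{a.s.}U\in\hh$, Proposition~\ref{uo-tech} yields $U_p\to U$ in $\sigma(\hh,\hs)$; since $Y_n,Z_0,Z_{ij}\in\hs$, this delivers coordinate-wise convergence $u_p(n)\to u(n)$, $a_p\to a$, $v_p(i,j)\to v(i,j)$. The rest of Claim~II — passing to a subsequence so that $(\lambda_p,y_p)\to(\lambda,y)$ and checking that $(\lambda,y)$ certifies $U\in\C$ — then carries over, provided one upgrades coordinate-wise convergence of $(v_p)$ to norm convergence in $\ell^1(\N\times\N)$. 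This is the analytic price of losing order boundedness; I expect it to follow from the Schur property of $\ell^1$ together with the uniform tail estimate $\sum_{ij}\abs{v_p(i,j)-v(i,j)}\leq\E\big[\abs{U_p-U}\widetilde{Z}\big]$ and a tightness argument based on norm boundedness of $(U_p)$ in $\hh$.

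Finally, $-W_0\notin\C$ is verified verbatim from Claim~III, since the computation uses only $\mathcal{T}(-W_0)=0\oplus(-1)\oplus 0$. The inclusion $-W_0\in\overline{\C}^{\sigma(\hh,\hs)}$ requires approximating $-W_0$ against finite test vectors $V_1,\dots,V_l\in\hs$ by elements of $\C\cap\hh$. This is where I anticipate the principal obstacle: the original witness $X=2^s\sum_{n\geq r}X_n-W_0+W_{sr}/2^s$ of Claim~III does not lie in $\hh$, because $\sum_{n\geq r}X_n\in\lh\setminus\hh$. A natural remedy is to truncate the tail to $\sum_{n=r}^N X_n$, choosing $N$ large so that $\sum_{n>N}\E[X_n\abs{V_t}]<\varepsilon/2^s$ for each $t$. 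However, the truncated candidate has finitely supported $u$-coordinate, which clashes with the ``for all $l$'' constraint $u\geq\lambda\sum_{i=1}^l 4^i\mathcal{S}y_i$ in the definition of $\C$, and in fact $(\E[XY_n])_n\in c_0$ for every $X\in\hh$ (by disjointness of $\supp Y_n$, a Borel--Cantelli-type argument, and order continuity of $\hh$). Reconciling these features — replacing the $\ell^\infty$-tail condition by a $c_0$-compatible finitary surrogate that still excludes $-W_0$ from $\C$ — is the delicate design step on which the whole construction hinges.
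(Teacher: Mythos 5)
You have correctly diagnosed where the naive transfer of Lemma~\ref{main-lemma} breaks down, but the step you defer --- ``replacing the $\ell^\infty$-tail condition by a $c_0$-compatible finitary surrogate that still excludes $-W_0$'' --- is precisely the content of the lemma, and your proposal does not supply it. Worse, the verbatim set $\C$ you start from degenerates in $\hh$: since $(\E[XY_n])_n\in c_0$ for every $X\in\hh$, the requirement $u\geq\lambda\sum_{i=1}^l4^i\mathcal{S}y_i$ for all $l$ forces $\lambda\sum_{i=1}^l4^i\norm{y_i}_1\leq\lim_nu(n)=0$, hence $\lambda=0$ (as $\sum_i2^i\norm{y_i}_1=1$ rules out $y=0$), and $\C$ collapses to $\{X:\mathcal{T}X\geq0\}$, which is an intersection of $\sigma(\hh,\hs)$-closed half-spaces and therefore $\sigma(\hh,\hs)$-closed --- no counterexample at all. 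A second gap: your route to norm convergence of $(v_p)$ in $\ell^1$ via the Schur property needs $\sigma(\ell^1,\ell^\infty)$-convergence, i.e.\ testing $U_p-U$ against $\sum_{ij}b_{ij}Z_{ij}$ for bounded $(b_{ij})$; but these test functions (and $\widetilde Z$ itself) lie in $\ls\setminus\hs$, and Proposition~\ref{uo-tech} only gives convergence against $\hs$. Without order boundedness of $(U_p)$ you cannot conclude $\E[\abs{U_p-U}\widetilde Z]\to0$, so norm convergence of $(v_p)$ is simply unavailable.

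The paper's actual construction redesigns the whole frame rather than patching Claim~III. It uses singly indexed $\{W_i\},\{Z_i\}$ and a map $\mathcal{T}:\hh\to c_0\oplus\R\oplus\ell^1$; the membership conditions become $a\geq-\lambda$, $v\geq\lambda\sum_j\bigl(\sum_i4^iy(i,j)\bigr)w_j$ and $u\geq\lambda\sum_j\bigl(\sum_iy(i,j)\bigr)s_j$ with the extra constraint $\sum_i4^i\norm{y_i}_1<\infty$, where $(s_j)$ is the summing basis of $c_0$. The exponential weight $4^i$ that excludes $-W_0$ is thus moved from the $u$-coordinate (where it is incompatible with $c_0$) to the $v$-coordinate in $\ell^1$, while the lower bound on $u$ is itself a $c_0$-vector; the approximating witnesses become $\frac{1}{2^s}\sum_{n=1}^rX_n-W_0+2^sW_r$, built from \emph{finite} sums and hence in $\hh$. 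Bounded a.s.\ closedness is then proved with only $\sigma(c_0,\ell^1)$- and $\sigma(\ell^1,c_0)$-convergence in hand, using Mazur's theorem to upgrade $(u_p)$ to norm convergence, a compactness argument on the order intervals containing $\bigl(\norm{\mathcal{P}_j(2^iy_{pi})}_1\bigr)_i$, and Fatou's lemma coordinatewise --- none of which appears in your outline. So the proposal identifies the obstacles accurately but does not prove the lemma.
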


\begin{proof}[Sketch of proof]
Observe that, for any positive random variable $X$ in $\lh$ (respectively,
$\ls$), the truncation $X\one_{\{X\leq k\}}$ lies in $\hh$ (respectively,
$\hs$). Moreover, $\bignorm{X\one_{\{X\leq k\}}}\uparrow_k\, \norm{X}$. As in
the proof of Lemma~\ref{main-lemma}, and applying suitable truncations, we find
a
norm bounded set of
disjoint, positive functions
$\{X_n\}_{n\geq 1}\cup \{W_0\} \cup \{W_{i}\}_{i \geq 1}$ in $H^\Phi$ and a norm
bounded set
of disjoint,  positive functions $\{Y_n\}_{n\geq 1}\cup \{Z_0\} \cup
\{Z_{i}\}_{i\geq 1}$ in $H^\Psi$ such that 
\begin{enumerate}
\item[(a)] $\supp Y_n \subset\supp X_n$, $\supp W_0 \subset\supp Z_0$ and $\supp
W_{i}\subset \supp
Z_{i}$ for all $n, i\geq 1$,
\item[(b)] The pointwise sums $\widetilde{X} = \sum_n X_n \in L^\Phi$ and
$\widetilde{Z} = \sum_i
Z_{i} \in L^\Psi$,
\item[(c)] $\E[X_nY_n]= \E[W_0Z_0] = \E[W_{i}Z_{i}] =1$ for all
$n,i\geq 1$.
\end{enumerate}
Since $Y_n$'s are disjoint, $Y_n\xrightarrow{a.s.}0$, so that $\lim_n\E[XY_n]=0$
for
any $X\in H^\Phi$. Thus, $\big(\E[XY_n]\big)_n \in c_0$.
Also, since $$\sum_i\bigabs{\E[XZ_{i}]} \leq
\E\big[\abs{X}\widetilde{Z}\big]<\infty ,$$ $\big(\E[XZ_{i}]\big)_i
\in  \ell^1$.
Therefore, the map 
\[\mathcal{T}: X \mapsto \Big(\E[XY_n] \Big)_n\oplus \E[XZ_0]  \oplus \Big(\E
[XZ_{i}] \Big)_i\]
 is a bounded positive linear operator from $H^\Phi$ into $c_0 \oplus \R \oplus
 \ell^1$.
For any $y = (y(i,j))_{i,j\geq 1} \in \ell^1(\N\times\N)$, put $y_i = (y(i,j))_j
\in \ell^1$ for any $i\geq 1$.
Let $(s_j)_j$ be the summing basis for $c_0$, i.e., $$s_j = (1,\dots,
1,0,\dots),$$
with $1$ occurring in the first $j$ coordinates, and let $(w_j)$ be
the standard basis for $ \ell^1$, i.e.,
$$w_j=(0,\dots,0,1,0,\dots),$$
with $1$ occurring in the $j$-th coordinate.
Let $\C$ be the subset of $H^\Phi$ consisting of all random variables $X\in
H^\Phi$ for
which, if we write $\mathcal{T}X = u\oplus a \oplus
v$, there are $\lambda \in \R$ and $ y\in \ell^1(\N\times \N)$ such
that 
\begin{align*}
 &\lambda\geq 0,\; y\geq 0, \; \sum_i2^i\|y_i\|_1 =1\mbox{, and }\sum_i
4^i\|y_i\|_1 < \infty.\\
& a \geq -\lambda,\; v\geq \lambda\sum_j\Big(\sum_i4^iy(i,j)\Big)w_j,\;u \geq
\lambda\sum_{j}\Big(\sum_iy(i,j)\Big)s_{j}.
\end{align*}
If the above occurs, we write 
$X\sim (\lambda,y).$

We omit the 
verifications that $\C$ is monotone, positively homogeneous and additive.
 For
$j\geq
1$, define the
projection $\mathcal{P}_j$ on $\ell^1$ by $$\mathcal{P}_j(b_1,b_2,\dots) =
(0,\dots, 0, b_j,
b_{j+1},\dots).$$
Then the condition on $u$ is  equivalent to
\begin{equation}\label{eq6}u \geq
\lambda\Big(\sum_i\bignorm{\mathcal{P}_1y_i}_1,
\sum_i\bignorm{\mathcal{P}_2y_i}_1,\dots\Big).\end{equation}

\noindent\underline{Claim I}: $U\in \C$ whenever there exists a  norm bounded
sequence
$(U_p)_p$ in $\C$ such
that  $(U_p)_p$ converges a.s.~to $U\in H^\Phi$.

Suppose that $$\mathcal{T}U_p = u_p\oplus a_p\oplus v_p,\quad \mbox{and}\quad
\mathcal{T}U = u\oplus a \oplus v.$$
Write $u_p=(u_p(j))_j$ and $u=(u(j))_j$. Then $$u_p(j)=\E[ U_pY_j] 
\to\E[UY_j]=u(j)$$ for each $j$. 
Moreover, since $(U_p)$ is norm bounded in $\hh$, $(u_p)$ is norm
bounded in $c_0$. Thus, $u_p\xrightarrow{\sigma(c_0,\ell^1)}u$. There is a
sequence of  convex combinations, $(\sum_{j=p_{n-1}+1}^{p_n}c_ju_j)$,
$0=p_0<p_1<p_2<\cdots$, which converge to $u$ in the norm of $c_0$. By replacing
$U_p$'s with the corresponding convex combinations (which also converge a.s.~to
$U$), we may assume that $(u_p)$ converges to $u$ in $c_0$-norm.
Similarly,  $v_p\rightarrow v$ coordinatewise and 
\begin{align*}
 \norm{v_p}_1\leq \sum_i\E\big[\abs{U_p}Z_i]= \E\big[\abs{U_p}\widetilde{Z}]\leq
\norm{U_p}_\Phi\norm{\widetilde{Z}}_\Psi.
\end{align*}
Therefore, $v_p\xrightarrow{\sigma(
\ell^1,
c_0)}v.$
Clearly, $a_p\rightarrow a.$

For each $p$, let $U_p \sim (\lambda_p,y_p)$. Write $y_{pi} =
(y_p(i,j))^\infty_{j=1}$ for each
$i$. Then there exists some
$M>0$ such that
\begin{equation}\label{eq7}
 M\geq  \|v_p\|_1 \geq \lambda_p\sum_{i,j}4^iy_p(i,j)=
\lambda_p\sum_i4^i\|y_{pi}\|_1. \end{equation}
In particular, $M\geq \lambda_p \sum_i 2^i \norm{y_{pi}}_1 = \lambda_p
\geq
0$. Thus $(\lambda_p)$ is a bounded sequence.
Passing to a subsequence, we may assume that $(\lambda_p)$ converges to some
$\lambda\geq0$. 
If $\lambda =0$, set $y$ to be any positive element in $\ell^1(\N\times \N)$
such that $\sum_i 2^i\norm{y_i}_1 = 1$ and $\sum_i 4^i\norm{y_i}_1 <\infty$.  It
is easily checked that $U\sim (0,y)$ and hence $U \in \C$.
Assume that $\lambda >0$. 
By $(\ref{eq7})$, for all sufficiently large $p$, 
$$\frac{\lambda}{2}\|\mathcal{P}_j(2^iy_{pi})\|_1\leq{\lambda_p}\|2^iy_{pi}\|_1
\leq
\frac{M}{2^i}$$
for all $i,j\geq 1$.
It follows that, for each $j$,  the sequence
$\big((\norm{\mathcal{P}_j(2^iy_{pi})}_1)_i\big)_{p\geq 1}$, being contained in
an interval of $\ell^1$, is relatively norm compact in
$ \ell^1$.
By passing to a subsequence, we may assume that there exists
$(b_{ij})_i \in  \ell^1$ such that 
\[ \lim_p(\norm{\mathcal{P}_j(2^iy_{pi})}_1)_i = (b_{ij})_i \text{ 
in $ \ell^1$-norm,} \]
for each $j\geq 1$.
In particular,
\[ b_{i,j+1}\leq b_{ij} \text{ and } \sum_ib_{i1} = \lim_p
\sum_i\norm{2^iy_{pi}}_1 = 1.
\]
Set $$y(i,j) = \frac{b_{ij} - b_{i,j+1}}{2^i}\mbox{ for all }i,j\geq 1,$$
and $$y: =
\big(y(i,j)\big) \geq 0.$$
We claim that $U\sim (\lambda,y)$ and hence $U \in \C$. Note that
\[ b_{ij}-b_{i,j+1}=\lim_p(\norm{\mathcal{P}_j(2^iy_{pi})}_1-\norm{\mathcal{P}_{
j+1}(2^iy_{pi})}
_1)=\lim_p2^{i}y_{p}(i,j).\] Thus $y_p(i,j)\rightarrow y(i,j)$ for any
$i,j\geq
1$. 
It follows from Fatou's Lemma and  $(\ref{eq7})$ that
$\sum_i4^i\norm{y_i}_1<\infty.$
Since $(u_p)$ converges to $u$ in $c_0$, $\lim_ju_p(j)=0$ uniformly in $p$.
By condition (\ref{eq6}) for $u_p$, 
$\lim_j\sum_i\norm{\mathcal{P}_jy_{pi}}_1=0$
uniformly
in $p$. In particular, for each $i$, $\lim_j\norm{\mathcal{P}_jy_{pi}}_1=0$
uniformly in
$p$, so
that $\lim_jb_{ij}=0$ for each $i$.
Therefore,  $\|2^iy_i\|_1=\sum_j(b_{ij}-b_{i,j+1}) =b_{i1}$ for all $i$,
and $$ \sum_i \|2^iy_i\|_1 = \sum
b_{i1} = 1.$$
Fatou's Lemma also implies that 
\[ v(j) = \lim_p v_p(j) \geq \liminf_p \lambda_p\sum_i4^iy_p(i,j)\geq
\lambda\sum_i4^iy(i,j),\]
and that (using (\ref{eq6}) for $u_p$)
\[ u(j) = \lim_pu_p(j) \geq \liminf_p \lambda_p\sum_i\norm{P_jy_{pi}}_1 \geq
\lambda \sum_i\norm{P_jy_i}_1.\]
This completes the proof that  $U\sim (\lambda,y)$.

\medskip

\noindent
\underline{Claim II}: $\C$ is not $\sigma(H^\Phi,H^\Psi)$-closed. Precisely,
$-W_0\in \overline{\C}^{\sigma(H^\Phi,H^\Psi)}\backslash \C$.

Since $\mathcal{T}(-W_0) = 0\oplus -1 \oplus 0$, it is easy to see that
$-W_0\notin \C$.
On the other hand, 
observe that,
for
any $s,r\geq 1$, $$X_{sr} := \frac{1}{2^s}\sum^r_{n=1}X_n -W_0 + 2^sW_r\sim
(1,y),$$
where $y(i,j)=\frac{1}{2^s}$ if $(i,j)=(s,r)$ and
$0$ otherwise, so that $X_{sr}\in \C$. As in the proof of Lemma
\ref{main-lemma}, one can show that  $-W_0$
lies in the
$\sigma(H^\Phi,H^\Psi)$-closure
of the double sequence $(X_{sr})_{s,r\geq 1 }$.
\end{proof}

\end{document}